\title{Lazy AC-Pattern Matching for Rewriting}
\author{Walid Belkhir and Alain Giorgetti\\
 FEMTO-ST, University of Franche-Comt\'{e},\\
  16 route de Gray, 25030 Besan\c{c}on cedex, France\\
INRIA Nancy - Grand Est, CASSIS project, 54600 Villers-l\`es-Nancy,
France\\
\url{{walid.belkhir,alain.giorgetti}@femto-st.fr}\\
}
\newcommand \matchth[3]{#1 {\ll}_{{ }_{#3}} {#2}}
\newcommand \matchAC[2]{\matchth{#1}{#2}{AC}}
\newcommand \rrule[2]{#1\rightarrow #2}
\newcommand \up {\uparrow}
\newcommand \set[1]{\{#1\}}
\newcommand \varset{\ensuremath{\mathcal{X}}}
\newcommand \trip[1]{\langle #1 \rangle}
\newcommand \lazy{$\textit{Lazy}^{\downarrow}\,$}
\newcommand \Ronetwo{\ensuremath{\mathcal{R}_1\cup\mathcal{R}_2}} 
\newcommand \Rthree{\ensuremath{\mathcal{R}_3^{\downarrow}}}
\newcommand \mycomment[1]{}
\newcommand \llist[1]{\ensuremath{\textit{LList}(#1)}}
\newcommand \Eu[1]{\EuScript{#1}}
\newcommand \numberAC[1]{\sharp_{{}_{AC}}{(#1)}}
\numberwithin{subcase}{case}
\begin{document}
\maketitle

\setlength{\abovecaptionskip}{-0.5cm}
\setlength{\belowcaptionskip}{-0.2cm}

\begin{abstract}
We define a lazy pattern-matching  mechanism  modulo associativity and commutativity. 
The solutions of a pattern-matching problem are stored in a lazy list composed of 
a first substitution at the head and  a non-evaluated object  that encodes  the
remaining  computations. We integrate the lazy AC-matching in a strategy language: 
 rewriting rule and strategy application produce a lazy list of terms.  
\end{abstract}

\section{Introduction}\label{Introduction:Sec}
Term rewriting modulo associativity and commutativity of some function symbols,
known as AC-rewri\-ting, is a key operation in many programming languages,
theorem provers and computer algebra systems. Examples of AC-symbols are $+$ and
$*$ for addition and multiplication in arithmetical expressions, $\lor$ and
$\land$ for disjunction and conjunction in Boolean expressions, etc.
AC-rewriting performance mainly relies on that of its AC-matching algorithm. On
the one hand, the problem of deciding whether an AC-matching problem has a
solution is NP-complete \cite{Benanav:1985:CMP}. On the other hand, the number of
solutions to a given AC-matching problem can be exponential in the size of its
pattern. Thus many works propose optimizations for AC-matching. One can divide
optimized algorithms in two classes, depending on what they are designed for. In
the first class  some structural properties are imposed on the terms, and the
pattern falls into one of several forms for which efficient algorithms can be
designed. Examples are the depth-bounded patterns in the many-to-one matching
algorithm used by Elan~\cite{Kirchner:2001:Promoting:AC} and greedy matching
techniques adopted in Maude~\cite{MaudeBook07}. In the second class there is no
restriction on the structural properties of the terms. Algorithms in this class
are search-based, and use several techniques to collapse the search space, such
as constraint propagation on non linear variables \cite{gramlich-unif88},
recursive decomposition via bipartite graphs \cite{Bipartie-Eker95}, ordering
matching subproblems based on constraint propagation \cite{Fast-Eker96}
 and Diophantine techniques \cite{Single-Eker02}.

Formal semantics proposed so far for AC-rewriting enumerate all the possible
solutions of each matching problem. More precisely, the application modulo AC of
a rewrite rule $\rrule{l}{r}$ to a given term $t$ usually proceeds in two steps.
Firstly, all the solutions (i.e. substitutions) $\sigma_1$, \ldots, $\sigma_n$
($n \ge 0$) of the AC-matching problem whether the term $t$ matches the pattern
$l$ are computed and stored in a structure, say a set
$\set{\sigma_1,\ldots,\sigma_n}$. Secondly, this set is applied to $r$ and the
result is the set $\set{\sigma_1(r),\ldots, \sigma_n(r)}$. Other structures such
as multisets or lists can alternatively be used for other applications of the
calculus. Directly implementing this \emph{eager} semantics is clearly less
efficient than a lazy mechanism that only computes a first result of the
application of a rewrite rule and allows the computation by need of the remaining
results.  As far
as we know no work defines the AC-matching in a lazy way and integrates it in a
rewriting semantics.

Another motivation of this work lies in our involvement in the formulation of
the homogenization of partial derivative equations within a
 symbolic computation tool~\cite{EuroSim11,CFM11}.
For this purpose, we have designed and developed a rule-based language
 called \texttt{symbtrans} \cite{BGL-JSC10} for ``symbolic transformations''
 built on the computer algebra system Maple. Maple  pattern-matching procedures
 are not efficient and its rewriting kernel is very elementary. Besides, Maple
 is a strict language, it does not provide any laziness feature.
We plan to extend \texttt{symbtrans} with AC-matching.

In this paper we first specify a lazy AC-matching algorithm which computes
 the solutions of an AC-matching problem by need.
 Then we integrate the lazy AC-matching in a strategy
language. In other words we define a lazy semantics for rule and strategy
functional application. Our goal is to specify the lazy AC-matching and strategy
 semantics towards an implementation in a strict language, such as
Maple. We reach this goal by  representing  lazy lists by means of explicit
objects.

The paper is organized as follows. Section~\ref{Notations:Sec} introduces some
terminology and notations. Section~\ref{surj:sec} shows a connection between
AC-matching and surjective functions, used in the remainder of the paper.
Section~\ref{ac:match:sec} formally defines  a lazy semantics of AC-matching
(with rewrite rules). It states its main properties and shows how it differs
from an eager semantics.
Section~\ref{lazy:rewriting:sec} integrates the lazy AC-matching in the
operational semantics of a rule application on a term, first at top position,
and then at other positions, through classical traversal strategies.
Section~\ref{implem:sec} presents a prototypal implementation of lazy
AC-matching and some experimental results derived from it.
Section~\ref{conclusion:sec} concludes.

\section{Notation and preliminaries}
\label{Notations:Sec}
Let $[n]$ denote the finite set of
positive integers $\{1,\ldots,n\}$ and let $|S|$ denote the cardinality of
a finite set $S$.  Thus, in particular, $|[n]| = n$.

Familiarity with the usual first-order notions of signature, (ground) term,
arity, position and substitution is assumed. Let $\mathcal{X}$ be a countable set
of variables, $\mathcal{F}$ a countable set of function symbols, and
$\mathcal{F}_{AC} \subseteq \mathcal{F}$ the set of associative-commutative
function symbols. Let $\EuScript{T}$
denote the set of terms built out of the symbols in $\mathcal{F}$ and the
variables in $\mathcal{X}$. Let $\EuScript{S}$ denote the set of substitutions
$\{x_1\mapsto t_1,\ldots,x_n\mapsto t_n\}$ with variables $x_1, \ldots, x_n$ in
$\mathcal{X}$ and terms $t_1, \ldots, t_n$ in $\EuScript{T}$. If $t$ is a term
and $\sigma$ is a
 substitution then $\sigma(t)$ denotes the term that results from the application
 of $\sigma$ to $t$. Given a position $p$, the subterm of $t$ at position
 $p$ is denoted by $t_{\mid p}$. We shall write $t_{\epsilon}$ for the
 symbol at the root of term $t$, i.e. $t=t_{\epsilon}(t_1,\ldots,t_n)$.

A term $t$ in $\EuScript{T}$ is \emph{flat} if, for any
position $p$ in $t$, $t_{\mid p} = +(t_1, \ldots, t_n)$ for some symbol $+$ in
$\mathcal{F}_{AC}$ implies that the root symbol ${(t_{i})}_{\epsilon}$ of each
direct subterm $t_{i}$ ($1 \leq i \leq n$) is not $+$. We denote  by
$\numberAC{t}$ the number of AC-symbols in the term $t$.

\paragraph{$\mathbb{T}$-Matching.}
For an equational theory $\mathbb{T}$ and any two terms $t$ and $t'$  in
$\EuScript{T}$ we say that $t$ matches $t'$ modulo $\mathbb{T}$ and write
$\matchth{t}{t'}{\mathbb{T}}$ iff there exists a substitution $\sigma$ in
$\EuScript{S}$ s.t. $\mathbb{T}\models (\sigma(t) =t')$. In this paper the theory
$\mathbb{T}$ is fixed. It is denoted $AC$ and axiomatizes the associativity and
commutativity of symbols in $\mathcal{F}_{AC}$,
i.e. it is the union of the sets of axioms $\{t_1+t_2=t_2+t_1,\;
(t_1+t_2)+t_3=t_1+(t_2+t_3)\}$
when $+$ ranges over $\mathcal{F}_{AC}$.

\paragraph{Rule-based semantics}
The following sections define the semantics of AC-matching, rule application and
strategy application by rewriting systems composed of labeled rewriting rules of
the form $\texttt{rule\_label:} \cdot \leadsto \cdot$, where the rewrite relation
$\leadsto$ should not be confused with the relation $\rrule{}{}$ of the rewriting
language. This semantics is said to be ``rule-based''.

\section{AC-matching and surjections}
\label{surj:sec}
Let $+ \in \mathcal{F}_{AC}$ be some associative and commutative function
symbol. This section first relates a restriction of the pattern-matching problem
$\matchth{+(t_1,\ldots,t_k)}{+(u_1,\ldots,u_n)}{AC}$, where  $1 \le k \le n$,
with the set $S_{n,k}$ of surjective functions (\emph{surjections}, for short)
from $[n]$ to $[k]$.
Then a notation is provided  to replace surjections with their rank to simplify
the subsequent exposition.

\begin{definition}[Application of a surjection on a term]
\label{surj:arrang:def}
Let $n \ge k \ge 1$ be two positive integers, $u=+(u_1,\ldots,u_n)$ be a term
 and $s \in S_{n,k}$ be a surjection from $[n]$ to
 $[k]$. The \emph{application of}
$s$ \emph{on} $u$ is defined by $s(u) =
+(\alpha_1,\ldots,\alpha_k)$ where $\alpha_i= u_j$ if
$s^{-1}(\{i\})=\{j\}$ and $\alpha_i~=~+(u_{j_1},\ldots,
u_{j_m})$ if $s^{-1}(\{i\})=\{j_1,\ldots,j_m\}$ and $j_1 < \ldots < j_m$  with 
$m\ge 2$.
\end{definition}

\begin{example}
\label{surj:appli:ex}
The application of the surjection $s=\{1\mapsto 2, 2 \mapsto 2, 3\mapsto 3, 4
\mapsto 1 \}$ on the term $+(u_1$, $u_2$, $u_3$, $u_4)$ is
 $s(+(u_1,u_2,u_3,u_4))=+(u_4,+(u_1,u_2),u_3)$.
\end{example}

The following proposition, whose proof is omitted, relates a subclass of
AC-matching problems with a set of surjections.
\begin{proposition}
\label{prop:matching:surj}
Let $t$ and $u$ be two flat terms with the same AC-symbol at the root and
containing no other AC-symbol than the one at the root. Let $k$ (resp. $n$) be
the arity of the root of $t$ (resp. $u$). Then the matching problem
$\matchth{t}{u}{AC}$ and the conjunction of matching problems $\bigwedge_{s\in
S_{n,k}} \matchth{t}{ s(u)}{\emptyset}$ admit the same set of solutions.
\end{proposition}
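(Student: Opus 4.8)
The plan is to prove the stated equivalence as two set inclusions, reading the compound problem $\bigwedge_{s\in S_{n,k}}\matchth{t}{s(u)}{\emptyset}$ as the aggregate of its syntactic subproblems, so that its solutions are exactly the substitutions solving $\matchth{t}{s(u)}{\emptyset}$ for \emph{some} $s\in S_{n,k}$. Write $t=+(t_1,\ldots,t_k)$ and $u=+(u_1,\ldots,u_n)$; by flatness and the hypothesis that no AC-symbol other than the root occurs, every $t_i$ is a variable or has a non-$+$ root, and every $u_j$ has a non-$+$ root. The cornerstone is the auxiliary fact that $s(u)=_{AC}u$ for every $s\in S_{n,k}$: flattening $s(u)$ merges each $\alpha_i$ whose root is $+$ back into its arguments and returns a permutation of $+(u_1,\ldots,u_n)$, so $s(u)$ and $u$ have the same multiset of arguments under the top $+$ and are therefore AC-equal.

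For the soundness inclusion, suppose $\sigma$ solves $\matchth{t}{s(u)}{\emptyset}$ for some $s$, i.e. $\sigma(t)=s(u)$ syntactically. Combining this with the auxiliary fact gives $\sigma(t)=s(u)=_{AC}u$, hence $AC\models\sigma(t)=u$, so $\sigma$ solves $\matchth{t}{u}{AC}$. This direction uses nothing beyond $s(u)=_{AC}u$ and the fact that syntactic matching of two $+$-rooted terms of equal arity decomposes argument-wise.

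For the completeness inclusion, suppose $\sigma$ solves $\matchth{t}{u}{AC}$, i.e. $+(\sigma(t_1),\ldots,\sigma(t_k))=_{AC}+(u_1,\ldots,u_n)$. I would flatten both sides under the top $+$: the right-hand side is already flat and yields the multiset $\{u_1,\ldots,u_n\}$ of non-$+$ terms, while the left-hand side yields the concatenation of the flattened argument multisets of the $\sigma(t_i)$. Equality of these two multisets modulo AC lets me assign to each index $j\in[n]$ an index $s(j)=i\in[k]$ of a factor $\sigma(t_i)$ accounting for the occurrence $u_j$ in the flattening, which defines a map $s\colon[n]\to[k]$. Each fibre $s^{-1}(\{i\})$ is nonempty because every $\sigma(t_i)$ contributes at least one leaf to the flattened left-hand side, so $s$ is a surjection, i.e. $s\in S_{n,k}$. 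By construction the flattened arguments of $\sigma(t_i)$ coincide as a multiset with $\{u_j : s(j)=i\}$, whence $\sigma(t_i)=_{AC}\alpha_i$ for each $i$ and therefore $\sigma(t)=_{AC}s(u)$, exhibiting the witnessing surjection.

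I expect the main obstacle to be the passage from $\sigma(t_i)=_{AC}\alpha_i$ to an honest solution of the purely syntactic problem $\matchth{t}{s(u)}{\emptyset}$. When $t_i$ has a non-$+$ root its fibre is forced to be a singleton and the match is rigidly syntactic, causing no trouble; but when $t_i$ is a variable, the AC-solution may bind it to an AC-variant of $\alpha_i$ (a reordering of the same group) that equals $\alpha_i$ only modulo AC. Making the two solution sets literally coincide therefore requires comparing substitutions up to AC-equivalence (identifying $\sigma$ with any $\sigma'$ such that $\sigma'(x)=_{AC}\sigma(x)$ for all $x$), so that each extracted surjection yields the sorted representative of its AC-class. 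The hypothesis that $t$ and $u$ contain no nested AC-symbol is exactly what keeps this bookkeeping finite and local: it guarantees that the fibres partition a flat multiset and that no recursive AC-matching inside the $u_j$ or $t_i$ is needed, reducing the whole problem to the choice of a surjection together with $k$ independent syntactic matches.
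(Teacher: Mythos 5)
The paper explicitly states that the proof of this proposition is omitted, so there is no in-paper argument to compare yours against; judged on its own terms, your proof is correct and is the natural two-inclusion argument one would expect. Two of your interpretive choices deserve explicit endorsement because they are genuinely necessary. First, you read the compound problem disjunctively (a solution solves $\matchth{t}{s(u)}{\emptyset}$ for \emph{some} $s$) even though the statement writes $\bigwedge$; this is the only reading consistent with the rule \texttt{E\_match\_AC}, which reduces the AC problem to $\bigvee_{j}\bigwedge_{i}\matchAC{t_i}{\alpha_i}$, so the $\bigwedge$ in the proposition is best regarded as a slip. Second, your observation that the two solution sets coincide only up to AC-equivalence of substitutions is a real point, not a pedantic one: a solution of the AC problem may bind a variable $t_i$ to, say, $+(u_2,u_1)$ or to a non-flat term, whereas every $\alpha_i$ in $s(u)$ lists its arguments flatly and in increasing index order, so the literal syntactic solution sets are strictly smaller; identifying substitutions that agree pointwise modulo AC (equivalently, taking canonical flat, sorted representatives) is exactly the repair needed, and it is harmless for the way the proposition is used later since Definition~\ref{surj:arrang:def} already fixes such a canonical representative. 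The completeness direction is the only place requiring care, and your multiset-flattening construction of the surjection, with the fibres nonempty because each $\sigma(t_i)$ contributes at least one flattened argument (no neutral element is assumed for AC symbols), is sound; the hypothesis excluding nested AC symbols is indeed what makes AC-equality of the flattened arguments collapse to syntactic equality with the $u_j$.
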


For $n\ge k \ge 1$, an integer $\textit{rk}(s)$ in $\{1, \ldots, |S_{n,k}|\}$
can be associated one-to-one to each surjection $s$ in $S_{n,k}$. It is called
the \emph{rank} of $s$.  It will be used in the subsequent sections to iterate
over the set $S_{n,k}$.
For each term $u$ of arity $n \geq 1$, each $1 \leq k \leq n$, and each integer
$i$ in $\{1,\ldots,|S_{n,k}|\}$, let $\textit{unrk}(i)$ denote the surjection
with rank $i$. The application of the integer $i$ to the term $u$, denoted by
$i(u)$, is defined by $i(u) = \textit{unrk}(i)(u)$, where application of a
surjection to a term is defined by Def.~\ref{surj:arrang:def}.

\section{AC-matching}
\label{ac:match:sec}
This section defines an eager and a lazy semantics for pattern-matching modulo
associativity and commutativity. By ``eager'' we mean a rewriting system
specifying the computation of all the solutions to a given matching problem,
without any mechanism to delay the production of other solutions when a first one
is produced. The AC-matching execution steps are made explicit by introduction of
a syntactic representation of matching problems by matching constraints. They are
a generalization to AC-matching of the matching constraints defined e.g.
in~\cite{CFK07} for syntactic matching (i.e. for the empty theory). Both
semantics of AC-matching are given in terms of a conditional rewrite system on
these constraints. In what follows all the terms are assumed to be flat.

\subsection{Eager AC-matching}
Figure~\ref{Eager:Matching-AC} proposes a rule-based \emph{eager} semantics for
AC-matching. This system (named \textit{Eager}) reduces \emph{eager matching
constraints} inductively defined by the grammar $\Eu{E} ::= \textbf{F} \;\;|\;\;
\textbf{I} \;\;|\;\;
 \matchAC{\Eu{T}}{\Eu{T}} \;\;|\;\; \Eu{E} \land \Eu{E} \;\;|\;\; \Eu{E} \lor
 \Eu{E}$.
This definition and the notations for matching constraints adapt and extend  the
ones from the $\rho$-calculus with explicit matching~\cite{CFK07}. The
constraint \textbf{F} denotes
 an absence of solution (failure). The constraint \textbf{I} denotes the
 identity substitution resulting from an initial trivial matching problem. The
 expression $\matchAC{p}{t}$ denotes the
elementary matching problem whether the term $t$ matches the pattern $p$ modulo
$AC$. The symbol $\land$ is the constructor of conjunctions of
 constraints. The symbol $\lor$ is introduced to enumerate various solutions as
 lists of constraints. Its priority is lower than that of $\land$. Both are
 assumed associative. Then the notation $C_1 \land \ldots \land C_n$ is used
 without ambiguity, and similarly for $\lor$, for $n \geq 1$. The constraint
 \textbf{I} is a neutral element for $\land$. The constraint \textbf{F} is a
 neutral element for $\lor$ and an absorbing element for $\land$.

 \begin{figure}[hbt!]
\setlength{\abovecaptionskip}{-0.2cm}
\setlength{\belowcaptionskip}{-0.1cm}
\begin{align*}
 \begin{array}{|l  l  l|}
\hline  
\texttt{E\_match\_AC:} 
  & \matchAC{\underbrace{+(t_1,\ldots,t_k)}_{t}}
    {\underbrace{+(u_1,\ldots,u_n)}_{u}}
  \leadsto 
  \bigvee_{j = 1}^{j=|S_{n,k}|} \bigwedge_{i = 1}^{i=k}
  \matchAC{t_i}{\alpha_i} & \\ 
  & \textrm{if } + \in \mathcal{F}_{AC}, \; k\le
  n \; \textrm{ and } j(u)=+(\alpha_1,\ldots,\alpha_k) & \\
&&\\
\texttt{E\_match:}
  & \matchAC{t_{\epsilon}(t_1,\ldots,t_n)}{t_{\epsilon}(u_1,\ldots,u_n)}
  \leadsto \bigwedge_{i = 1}^{i = n}\matchAC{t_i}{u_i}
  & \textrm{if }t_{\epsilon}\in \mathcal{F} \setminus \mathcal{F}_{AC} \\ 
&&\\
\texttt{E\_match\_AC\_fail:} 
  & \matchAC{+(t_1,\ldots,t_k)}{+(u_1,\ldots,u_n)} \leadsto \textbf{F} 
  & \textrm{if } + \in \mathcal{F}_{AC} \textrm{ and } k > n \\ 
&&\\
\texttt{E\_match\_fail:}
  & \matchAC{t_{\epsilon}(t_1,\ldots,t_n)}{u_{\epsilon}(u_1,\ldots,u_m)}
     \leadsto \textbf{F} 
  & \textrm{if } 
      t_{\epsilon}\neq u_{\epsilon} \\ 
&&\\
\texttt{E\_fail\_gen:} 
  & \matchAC{x}{t} \land E \land  \matchAC{x}{t'} \leadsto \textbf{F}  
   & \textrm{if } x\in \mathcal{X} \textrm{ and } t \neq t' \\
&&\\
\texttt{E\_var\_clash:} & \matchAC{t_{\varepsilon}(t_1,\ldots,t_n)}{x}  \leadsto \textbf{F}  & 
 \textrm{if }  x\in \mathcal{X}\\
&&\\
\texttt{E\_DNF\_1:} & F \land (G\lor H) \leadsto (F \land G) \lor (F \land H)&\\
\texttt{E\_DNF\_2:} &  (G\lor H) \land F \leadsto (G \land F) \lor (H \land
F)&\\
\hline
\end{array}
\end{align*}
\caption{\textit{Eager} system of AC-matching rules \label{Eager:Matching-AC}}
\end{figure}

In Figure~\ref{Eager:Matching-AC}, the notation $\bigwedge_{i=1}^{i=n}
\matchAC{t_i}{u_i}$ stands for \textbf{I} if $n = 0$ and for $\matchAC{t_1}{u_1}
\land \ldots \land \matchAC{t_n}{u_n}$ otherwise. The rule \texttt{E\_match\_AC}
corresponds to Proposition~\ref{prop:matching:surj} when $t_1$, \ldots, $t_k$,
$u_1$, \ldots, $u_n$ contain no AC symbol, and generalizes it otherwise.  The
positive integer $j$ iterates over surjection ranks. The rules \texttt{E\_match},
\texttt{E\_match\_fail}, \texttt{E\_fail\_gen}, and \texttt{E\_var\_clash} are
the same as in syntactic pattern-matching. With AC symbols, they are completed
with the rule \texttt{E\_match\_AC\_fail}. The rules \texttt{E\_DNF\_1} and
\texttt{E\_DNF\_2} correspond to the normalization of constraints into a
disjunctive normal form, DNF for short. A constraint is in DNF if it is of
the form $\lor_{i}\land_j F_{i,j}$, where $F_{i,j}$ is a constraint not
containing $\lor$ and $\land$.

It is standard to show that the system \textit{Eager} is terminating. The rules
\texttt{E\_DNF\_1} and \texttt{E\_DNF\_2} make it not confluent, but the
following post-processing reduces to a unique normal form all the irreducible
constraints it produces from a given pattern-matching problem. The
post-processing consists of \emph{(i)} replacing the trivial constraints of the
form $\matchAC{x}{x}$ by \textbf{I}, \emph{(ii)} replacing each non-trivial
constraint $\matchAC{x}{t}$ by the elementary substitution
 $x\mapsto t$, \emph{(iii)} eliminating duplicated elementary substitutions by
 replacing each expression of the form $\land_{i=1}^{i=n} E_i$ (with $n \ge 1$)
 by the set $\bigcup_{i=1}^{i=n} \{E_i\}$ that represents a
non-trivial substitution, then \emph{(iv)} replacing $\textbf{I}$ by $\{\,\}$
and finally \emph{(v)} replacing each disjunction  of
substitutions $\lor_{j=1}^{j=n} S_j$ with $n \ge 1$ by the set
$\bigcup_{j=1}^{j=n} \{S_j\}$, that represents a set of substitutions. A constraint in normal form can be either
\textbf{F}, if there is no solution to the initial matching problem, or a
non-empty set $\set{\sigma_1,\ldots,\sigma_n}$ of substitutions  which
corresponds to the set of all solutions of the matching problem. In particular
$\sigma_i$ may be $\{\,\}$, for some $i\in [n]$.

The theory for the associative symbols $\land$ and $\lor$ deliberately excludes
commutativity, because they appear in the pattern of some rules of the
\textit{Eager} system. We now motivate  this design choice. 
Since the \textit{Eager} system  is terminating and confluent 
-- in the sense explained above -- we can consider
it  as an algorithm and implement it without
modification, in a programming language supporting pattern-matching modulo the
theory of constraint constructors. Thus it would be ill-founded to require that
the language supports AC-matching, in order to extend it precisely with an
AC-matching algorithm! In this aspect our approach differs from the one
of~\cite{rhoCalIGLP-I+II-2001}, whose more compact calculus handles ``result
sets'', i.e. the underlying theory includes associativity, commutativity and
idempotency, but matching with patterns
containing set constructors is implemented by explorations of set data
structures.

\subsection{Lazy AC-matching}
\label{ac:matching:sec}

We now define a lazy semantics for pattern-matching modulo associativity and
commutativity, as a rewriting system named \emph{Lazy}. It reduces constraints
defined as follows.

\begin{definition}
The set of \emph{delayed matching
constraints}, hereafter called \emph{constraints} for short, is inductively
defined by the grammar: $ \Eu{C} ::= \textbf{F} \;\;|\;\; \textbf{I}
\;\;|\;\;
 \matchAC{\Eu{T}}{\Eu{T}} \;\;|\;\; \Eu{C} {\land} \Eu{C} \;\;|\;\;
  \Eu{C} \lor \Eu{C}
    \;\;|\;\;   \textit{Next}(\Eu{C}) \;\;|\;\;    
    \trip{\Eu{T},\Eu{T},\mathbb{N}^{\star}}.$
\end{definition} 
The first five constructions have the same meaning as in eager matching
constraints. As in the eager case, the symbols $\land$ and $\lor$ are
associative, the constraint \textbf{F} is an absorbing element for $\land$, and
the constraint \textbf{I} is a neutral element for  $\land$. However,  the
constraint \textbf{F} is no longer a neutral element for $\lor$.
 The construction $\textit{Next}(C)$ serves to activate the delayed computations
 present in the constraint $C$.
 When the terms $t$ and $u$ have the same AC symbol at the root, the constraint
 $\trip{t,u,s}$ denotes
the delayed matching computations of the problem $\matchAC{t}{u}$ starting with
the surjection 
with rank $s$, and hence, the matching computations for all the
surjections 
with a rank $s'$ s.t. $s' < s$ have already been performed. The conditions
that $t$ and $u$ have the same AC symbol at the root and that the arities of $t$
and $u$ correspond to the domain and codomain of 
the surjection with rank $s$ are not made
explicit in the grammar, but it would be easy to check that they always hold by
inspecting the rules of the forthcoming system \emph{Lazy}.
Delayed matching constraints of the form $\trip{t,u,s}$ and satisfying these
conditions are more simply called \emph{triplets}.

Figure~\ref{Matching-AC} defines the first part of the lazy
semantics, a rewriting system named $\mathcal{R}_1$.
The rules \texttt{match\_AC\_fail}, \texttt{match},  \texttt{match\_fail}, 
 \texttt{var\_clash} and \texttt{fail\_gen} are standard and
already appeared  in the eager matching system.
The rule \texttt{match\_AC} activates the delayed matching computations starting from 
the first surjection. It is immediately followed by the rule
\texttt{match\_surj\_next} from the rewriting system $\mathcal{R}_2$ defined in
Figure~\ref{Next}.

\begin{figure}[htb!]
\begin{align*}
 \begin{array}{@{}|l  l  l|@{}}
\hline  
\texttt{match\_AC:} 
&\matchAC{\underbrace{+(t_1,\ldots,t_k)}_{t}}{\underbrace{+(u_1,\ldots,u_n)}_{u}}
  \leadsto {\textit{Next}(\trip{t,u,1})}  
 & \textrm{ if } + \in \mathcal{F}_{AC} \textrm{ and } \;  k\le n \\
\texttt{match:} 
 &\matchAC{t_{\epsilon}(t_1,\ldots,t_n)}{t_{\epsilon}(u_1,\ldots,u_n)} \leadsto 
  \bigwedge_{i = 1}^{i = n} \matchAC{t_i}{u_i}
  & \textrm{ if }t_{\epsilon}\in \mathcal{F} \setminus \mathcal{F}_{AC}\\
&&\\
\texttt{match\_AC\_fail:} 
 &\matchAC{+(t_1,\ldots,t_k)}{+(u_1,\ldots,u_n)} \leadsto 
  \textbf{F} 
&\textrm{ if } + \in \mathcal{F}_{AC} \textrm{ and } k > n \\
&&\\
\texttt{match\_fail:}
 & \matchAC{t_{\epsilon}(t_1,\ldots,t_n)}{u_{\epsilon}(u_1,\ldots,u_m)}
    \leadsto \textbf{F} 
   & \textrm{ if } t_{\epsilon}\neq u_{\epsilon}\\ 
&&\\
\texttt{fail\_gen:} 
 &\matchAC{x}{t}\land C \land \matchAC{x}{t'} \leadsto \textbf{F} 
   & \textrm{ if }  x\in \mathcal{X} \textrm{ and } t \neq t'\\
&&\\
\texttt{var\_clash:} 
  & \matchAC{t_{\varepsilon}(t_1,\ldots,t_n)}{x}  \leadsto \textbf{F}  
   & \textrm{ if }  x\in \mathcal{X} \\
\hline
\end{array}
\end{align*}
\caption{$\mathcal{R}_1$ system: AC-matching rules \label{Matching-AC}}
\end{figure}

\begin{figure}[hbt!]
\begin{align*}
\begin{array}{@{}|l@{\;}l@{}l@{\;}|@{}}
\hline
 \texttt{fail\_next:} &\textbf{F} \lor C \leadsto \textit{Next}({C}) \hspace{2.9cm}
\texttt{next\_fail:} \hspace{1cm} \textit{Next}(\textbf{F}) \leadsto \textbf{F} &\\
\texttt{next\_id:}&  \textit{Next}(\textbf{I}) \leadsto \textbf{I} \hspace{3.7cm}
 \texttt{next\_basic:} \hspace{0.75cm}  \textit{Next}(\matchAC{x}{u}) \leadsto \matchAC{x}{u} & \\
&& \\
\texttt{next\_and:}&  \textit{Next}(C_1 \land C_2) \leadsto 
  \textit{Next}(C_1) \land \textit{Next}(C_2)  &\\
\texttt{next\_or:}& \textit{Next}(C_1 \lor C_2) \leadsto  \textit{Next}(C_1)\lor C_2, 
   \;\; \textrm{ if } C_1 \neq \textbf{F}  &\\
&& \\
\texttt{match\_surj\_next:}& \textit{Next}(\trip{t,u,{s}}) \leadsto
(\bigwedge_{i=1}^{i=k}\matchAC{t_i}{\alpha_i}) \lor {\trip{t,u,{s}+1}} &  \\
 & \textrm{ if }  t=t_{\epsilon}(t_1,\ldots,t_k), \;
 u=t_{\epsilon}(u_1,\ldots,u_n), \; {s} < |S_{n,k}| \textrm{ and } 
 {s}(u)=t_{\epsilon}(\alpha_1,\ldots,\alpha_k)    &\\ 
&&\\
\texttt{match\_surj\_last:}& \textit{Next}(\trip{t,u,|S_{n,k}|}) 
\leadsto \bigwedge_{i=1}^{i=k}\matchAC{t_i}{\alpha_i} &\\
 & \textrm{ if } 
     t=t_{\epsilon}(t_1,\ldots,t_k), \; 
     u=t_{\epsilon}(u_1,\ldots,u_n)  \textrm{ and }
     |S_{n,k}|(u)=t_{\epsilon}(\alpha_1,\ldots,\alpha_k)&\\
\hline 
\end{array}
\end{align*}
\caption{$\mathcal{R}_2$ system: \textit{Next} rules \label{Next}}
\end{figure}
In $\mathcal{R}_2$ the rule \texttt{fail\_next} states that the presence of a
failure activates the delayed computations in the constraint $C$. The other rules
propagate the activation of the delayed computations on the inductive structure
of constraints. The rule \texttt{next\_and}
 propagates the \textit{Next} constructor to sub-constraints. The rule
 \texttt{next\_or} propagates the $\textit{Next}$ constructor to the
 head of a list of constraints, provided this head is not \textbf{F}.

When the constraint is $\trip{t,u,s}$, two cases have to be considered.  If the
surjection rank $s <|S_{n,k}|$ is not the maximal one, then the rule
\texttt{match\_surj\_next} reduces the constraint
$\textit{Next}(\trip{t,u,{s}})$ to a set of matching constraints according to
the surjection with rank $s$,
 followed by  delayed computations that will
be activated from the next surjection, with rank $s+1$.
In the final case when $s=|S_{n,k}|$ (rule \texttt{match\_surj\_last}), there is
no delayed computations.
 
\begin{figure}[hbt!]
\begin{align*}
\begin{array}{|l l l|}
\hline
&\texttt{DNF\_1:}&  F \land (G\lor H) \leadsto (F \land G) \lor (F \land H)\\
&\texttt{DNF\_2:} &  (G\lor H) \land F \leadsto (G \land F) \lor (H \land F)\\
\hline
\end{array}
\end{align*}
\caption{$\mathcal{R}_3$ system : DNF  rules \label{Simplify}}
\end{figure}

Rules for the reduction of constraints in DNF are the same as in the eager
system, but are defined in the separate rewrite system $\mathcal{R}_3$ given in
Figure~\ref{Simplify}.

Let us denote by $\textit{Lazy}$ the
rewriting system $\mathcal{R}_1\cup \mathcal{R}_2 \cup \mathcal{R}_3$ composed of the 
 rules of Figures~\ref{Matching-AC}, \ref{Next} and
 \ref{Simplify}, equipped with the evaluation strategy \emph{sat} that consists of 
the iteration of the following process:
\emph{(i)} Applying the rules of $\mathcal{R}_1\cup \mathcal{R}_2$ until no
rule is applicable, then \emph{(ii)} applying the rules of $\mathcal{R}_3$
until no rule is applicable.

\subsection{Termination of lazy AC-matching}
We prove the termination of the lazy AC-matching system $\textit{Lazy}$
modularly, by considering some of its sub-systems separately, in the following lemmas. 

\begin{lemma}\label{R2:termination:lemma}
The rewriting system $\mathcal{R}_2$ is terminating.
\end{lemma}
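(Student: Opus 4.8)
The plan is to exhibit a well-founded measure on constraints that strictly decreases under every rule of $\mathcal{R}_2$. First I would inspect the seven rules of $\mathcal{R}_2$ to see what each does structurally. The rules \texttt{next\_fail}, \texttt{next\_id} and \texttt{next\_basic} strip a $\textit{Next}$ constructor from a leaf; \texttt{next\_and} and \texttt{next\_or} push $\textit{Next}$ inward past a binary connective; \texttt{fail\_next} replaces $\textbf{F}\lor C$ by $\textit{Next}(C)$; and the two surjection rules \texttt{match\_surj\_next} and \texttt{match\_surj\_last} consume a $\textit{Next}(\trip{t,u,s})$, either replacing it by a disjunction whose right disjunct is a triplet with rank $s+1$ (strictly closer to the bound $|S_{n,k}|$), or eliminating it entirely at the maximal rank. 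The key observation is that the rank $s$ in a triplet is bounded above by $|S_{n,k}|$, which depends only on the fixed arities $n,k$ of the two terms; so the "distance to the last surjection'' $|S_{n,k}| - s$ is a natural number that strictly decreases each time \texttt{match\_surj\_next} fires.

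Because no single obvious quantity decreases for all rules simultaneously, I would use a lexicographic combination, or equivalently a polynomial/weight interpretation. Concretely I would assign to each constraint $C$ a measure $\mu(C)\in\mathbb{N}$ (or a pair in $\mathbb{N}^2$ ordered lexicographically) built by structural recursion. A natural first component counts the total number of $\textit{Next}$ occurrences together with the "pending surjection work'' contributed by triplets under a $\textit{Next}$: for a triplet occurring as $\textit{Next}(\trip{t,u,s})$ I would add a term proportional to $|S_{n,k}|-s+1$, so that both \texttt{match\_surj\_next} (which decrements the rank but leaves a fresh triplet with no $\textit{Next}$ around it yet) and \texttt{match\_surj\_last} strictly decrease it. For the propagation rules \texttt{next\_and} and \texttt{next\_or} I would make the weight of $\textit{Next}$ depend on the size of its argument so that distributing $\textit{Next}$ over a binary node costs strictly less than having one $\textit{Next}$ over the whole subterm; a standard device is to give $\textit{Next}(C)$ weight $2\cdot\mu(C)+1$ while $\land$ and $\lor$ add their subweights, making the push-in rules decreasing. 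The rule \texttt{fail\_next}, which turns $\textbf{F}\lor C$ into $\textit{Next}(C)$, must be checked against this interpretation; here I would rely on $\textbf{F}$ and $\lor$ carrying enough weight that removing them pays for the newly introduced $\textit{Next}$.

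The main obstacle I anticipate is reconciling \texttt{fail\_next} with the other rules under a single interpretation: it is the only rule that \emph{introduces} a $\textit{Next}$ constructor, so a measure that merely counts $\textit{Next}$ occurrences will not work, and the weights on $\textbf{F}$, $\lor$ and $\textit{Next}$ must be calibrated so that the net effect is still a strict decrease. I would resolve this by choosing the constant weight of $\textbf{F}$ and the additive overhead of $\lor$ to dominate the multiplicative blow-up introduced by the new $\textit{Next}$, verifying the inequality $\mu(\textbf{F}\lor C) > \mu(\textit{Next}(C))$ explicitly.

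Once the interpretation is fixed, the proof reduces to a finite case analysis: for each of the seven rules $\ell \leadsto r$ I would show $\mu(\ell) > \mu(r)$, and conclude that $\mathcal{R}_2$ admits no infinite rewrite sequence since $\mu$ maps into a well-founded order. I would also note that $\mu$ is monotone under the term-forming operations, so that a rewrite applied in any context strictly decreases the global measure, which is what termination of the full system (as opposed to just top-level redexes) requires. The surjection rules are the conceptually essential ones — they are where the bounded rank guarantees progress — while the remaining rules are routine once the weights are chosen.
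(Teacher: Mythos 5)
Your overall strategy---a single well-founded measure that strictly decreases under every rule---is a legitimate alternative to the paper's argument, and you correctly isolate the two danger points: the bounded surjection rank, and the rule \texttt{fail\_next}, which is the only rule that \emph{creates} a $\textit{Next}$. But your proposed resolution of the second point cannot work. With your own choices ($\land$ and $\lor$ additive, $\mu(\textit{Next}(C))=2\mu(C)+1$, which is indeed what \texttt{next\_and} and \texttt{next\_or} require, since $\textit{Next}$ must be superadditive to pay for being duplicated over a binary node), the rule \texttt{fail\_next} demands $c_{\textbf{F}}+c_{\lor}+\mu(C) > 2\mu(C)+1$, i.e.\ $c_{\textbf{F}}+c_{\lor} > \mu(C)+1$. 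No constant weights for $\textbf{F}$ and $\lor$ can dominate this, because $\mu(C)$ is unbounded; and switching to a multiplicative $\lor$ to get a non-constant payoff breaks \texttt{next\_or} instead. So the inequality $\mu(\textbf{F}\lor C)>\mu(\textit{Next}(C))$ that you say you would ``verify explicitly'' is in fact unsatisfiable for the class of interpretations you describe. This is a genuine gap, located exactly at the obstacle you identified.

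The repair is the lexicographic option you mention only in passing: take as first component the number of occurrences of $\textbf{F}$. No rule of $\mathcal{R}_2$ increases it (no right-hand side introduces a fresh $\textbf{F}$, and \texttt{next\_and}, \texttt{next\_or} do not duplicate subterms), and \texttt{fail\_next} strictly decreases it; your weight then only needs to handle the remaining rules. This is essentially how the paper deals with \texttt{fail\_next}: it argues that since no rule produces $\textbf{F}$, that rule fires only boundedly often; it further observes that the right-hand sides of \texttt{match\_surj\_next} and \texttt{match\_surj\_last} are irreducible by $\mathcal{R}_2$ (so, within $\mathcal{R}_2$ alone, the rank bookkeeping you build into your measure is not needed---it is deferred to the proof of termination of $\mathcal{R}_1\cup\mathcal{R}_2$ in Lemma~\ref{R12:termination:lemma}); and it then reduces the lemma to the five $\textit{Next}$-pushing rules, handled by a recursive path ordering. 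With the lexicographic fix your proof would go through, and would even subsume part of the later lemma, but as written the central inequality fails.
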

\begin{proof}
On the one hand, no rule in $\mathcal{R}_2$ produces the \textbf{F}
constraint. Thus, after replacing with $\textit{Next}(C)$ the occurrences of
$\textbf{F} \lor C$ in the input constraint, the rule \texttt{fail\_next} is no longer used.
On the other hand, no rule in $\mathcal{R}_2$ can reduce the right side of the
rules \texttt{match\_surj\_next} and \texttt{match\_surj\_last}. 
Thus it is sufficient to prove the termination of 
${\mathcal{R}}_4 = \{\texttt{next\_fail}$, $\texttt{next\_id}$,
 $\texttt{next\_basic}$, $\texttt{next\_and}$, $\texttt{next\_or}\}$.
This is standard, using a recursive path ordering~\cite{Dersh_Ordering82}.
Intuitively, the termination of ${\mathcal{R}}_4$ is ensured by the fact that all the rules
in ${\mathcal{R}}_4$ push the $\textit{Next}$ constructor down until reaching the leaves. 
\end{proof}

 Before proving the termination of
$\mathcal{R}_1\cup \mathcal{R}_2$, we need to introduce a variant of
terms and triplets to prove a technical lemma about their occurrences in
derivations.
The \emph{marked term} $t^{M}$ associated to the term $t$ is obtained from $t$ by
replacing each AC symbol $+$ at some position $p$ in $t$ with $+_p$. 
The
\emph{marked triplet} associated to the triplet $\trip{t,u,s}$ is defined to be
$\trip{t^M,u,s}$. The \emph{marked constraint} associated to the
constraint $\matchAC{t}{u}$ is $\matchAC{t^M}{u}$.
The \emph{marked variant} $\mathcal{R}_1^{M}$ (resp.
$\mathcal{R}_2^{M}$, $\textit{Lazy}^{M}$) of the $\mathcal{R}_1$ (resp.
$\mathcal{R}_2$, \textit{Lazy}) system is obtained from the latter by
replacing triplets with marked triplets, constraints with marked constraints and
$+(t_1,\ldots,t_k)$ by $+_p(t_1,\ldots,t_k)$ for any $p$, in the pattern  of the rules
\texttt{match\_AC} and \texttt{match\_AC\_fail}. It is clear, and thus admitted,
that the derivations of $\textit{Lazy}^{M}$ are the variants of the derivations
of $\textit{Lazy}$, in a natural sense.

\begin{lemma}\label{bounded:triplet:lemma}
Let $t^M$ be a marked term with an AC symbol at the root and $u$ be a term with
the same AC symbol at the root. Consider a derivation $ \matchAC{t^M}{u}\leadsto
C_1 \leadsto C_2  \leadsto
 \ldots$ with rules in $\mathcal{R}_1^{M}\cup \mathcal{R}_2^{M}$. Then, the
 number of marked triplets in the sequence $C_1,C_2,\ldots$ is strongly bounded,
 in the following sense:
\emph{(i)} There is an upper bound for the number of marked triplets in each
$C_i$ and \emph{(ii)} for each subterm $t'$ of $t$ with an AC symbol at the root,
if a marked triplet built up on the marked subterm $t'$ (i.e. of the form
$\trip{t',u',s}$ for some $u'$ and $s$) is deleted from some $C_i,i\ge 1,$ then
it never appears again in $C_j$, for all $j>i$.
\end{lemma}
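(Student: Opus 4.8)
The plan is to index triplets by the position of their pattern component, which is exactly what the marking makes possible: since distinct AC occurrences of $t$ carry distinct marks $+_p$, every marked subterm $t'$ of $t^M$ with an AC root corresponds to a unique position $p$, and a triplet $\trip{t',u',s}$ can be attributed unambiguously to that position. Let $N = \numberAC{t}$ be the (finite) number of such positions. I would prove both parts simultaneously by induction on the length of the derivation, maintaining the invariant that in each $C_i$ every AC position of $t^M$ carries \emph{at most one} marked triplet. Granting the invariant, part \emph{(i)} is immediate with the explicit bound $N$, since the map sending a triplet of $C_i$ to the position of its pattern component is then injective.

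For the inductive step I would inspect each rule of $\mathcal{R}_1^M\cup\mathcal{R}_2^M$ for its effect on triplets. The rules \texttt{match}, \texttt{match\_fail}, \texttt{var\_clash}, \texttt{match\_AC\_fail}, \texttt{fail\_gen}, \texttt{next\_id}, \texttt{next\_basic}, \texttt{next\_fail} and \texttt{next\_and} neither create triplets nor can turn a one-per-position configuration into a two-per-position one. The rule \texttt{match\_AC} creates a single rank-$1$ triplet $\trip{t',u',1}$ at the position of $t'$ out of a constraint $\matchAC{t'}{u'}$, while \texttt{match\_surj\_next} (resp. \texttt{match\_surj\_last}) consumes $\trip{t',u',s}$ and produces $\trip{t',u',s+1}$ (resp. nothing) at the \emph{same} position, opening matching subproblems only for the strictly deeper children of $t'$. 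The only mechanism that could place two triplets at one position is a disjunction whose two sides both reach that position; the key point is that the tail triplet produced by \texttt{match\_surj\_next} sits undeveloped in the right operand of a $\lor$ and can be activated only by \texttt{fail\_next}, which requires the left operand to have already been reduced to $\textbf{F}$. Hence the surjection ranks attached to a given position are explored strictly sequentially, no two of them are live at once, and the one-triplet-per-position invariant is preserved. This also shows where the marking is indispensable: without it, two distinct positions carrying syntactically equal AC subterms would yield indistinguishable triplets and the counting argument would collapse.

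For part \emph{(ii)} I would combine two observations. First, for a fixed position the rank of its live triplet is monotone: it is set to $1$ by \texttt{match\_AC} and can only be increased by \texttt{match\_surj\_next} until \texttt{match\_surj\_last} removes it, so within the processing triggered by one activation of $\matchAC{t'}{u'}$ the triplet $\trip{t',u',s}$ occurs at a single step and, once consumed, that rank is never revisited. Second, the serialization enforced by \texttt{fail\_next} means that a branch of the disjunction tree is entered only after all earlier branches have collapsed to $\textbf{F}$; reading the derivation as a traversal of the finite tree of surjection choices over the $N$ positions, a state (position, incoming term, rank) that has been left is not re-entered along the same branch.

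The hard part, and the place where I expect the real work to lie, is ruling out \emph{regeneration} of a triplet across different branches. Applying \texttt{match\_surj\_next} at an ancestor enumerates \emph{surjections}, not distinct decompositions, so when $u'$ has repeated direct subterms two different surjection ranks at an ancestor can send the same grouping to the child at position $p$, producing the same constraint $\matchAC{t'}{u'}$ twice and hence the same rank-$1$ triplet $\trip{t',u',1}$ at two different moments. Controlling this is the crux of \emph{(ii)}: one must argue that a consumed triplet is not rebuilt under the activation discipline of the system. I would make this precise by attaching to each triplet the full \emph{address} of the branch in which it lives, namely the list of ancestor surjection ranks, and showing that the $\textbf{F}$-driven activation of \texttt{fail\_next} together with the marking forces this address to increase strictly in traversal order, so that a triplet \emph{together with its address} is created at most once; the statement of \emph{(ii)}, read up to this address, then follows, and it is this refinement that I would carry through in detail.
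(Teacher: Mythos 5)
For part \emph{(i)} your argument is essentially the paper's, made more precise: the marking ties each triplet to an AC position of $t$, \texttt{match\_AC} is the only rule introducing a triplet at a fresh position while \texttt{match\_surj\_next} only replaces one by another at the same position, and your explicit one-triplet-per-position invariant (justified by the fact that a dormant tail triplet can only be activated by \texttt{fail\_next} after the head of its disjunction has collapsed to $\textbf{F}$) is exactly what is needed to get the bound $\numberAC{t}$; the paper leaves that invariant implicit.

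The divergence is in part \emph{(ii)}, and there your scepticism is warranted. The paper's proof disposes of regeneration in one clause --- ``the other rules neither create new marked triplets nor duplicate existing ones'' --- but this is precisely where the problem hides. When the triplet at the position \emph{above} $t'$ advances from rank $s$ to rank $s+1$, the new left disjunct contains a fresh constraint $\matchAC{t'}{\alpha_i}$, and two distinct surjections can assign the same preimage to the index $i$ of $t'$ (for instance six surjections in $S_{4,3}$ share $s^{-1}(\{1\})=\{1\}$), so $\alpha_i$ can equal the earlier $u'$; then \texttt{match\_AC} rebuilds $\trip{t',u',1}$ and, if the subproblem fails again, every rank up to $\trip{t',u',|S_{n,k}|}$, i.e. exactly the marked triplet that \texttt{match\_surj\_last} had deleted. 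So statement \emph{(ii)} as printed fails on such derivations, and the paper's one-line justification does not close the case you raise. Your repair --- indexing a triplet by the tuple of ancestor surjection ranks and showing that a triplet-with-address is produced at most once --- is the right fix, and the refined statement is still sufficient for Lemma~\ref{R12:termination:lemma}: addresses range over a finite set, so \texttt{match\_surj\_next} can fire only finitely often. What remains is to actually carry out that outermost-first (lexicographic) induction on addresses, which the paper will not help you with; note that you will then have proved a corrected variant of the lemma rather than the lemma verbatim.
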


\begin{proof}
(i) The variant of the rule \texttt{match\_AC} replaces an elementary matching
problem (between two terms with an AC symbol at the root) with a marked triplet.
It is the only rule of $\mathcal{R}_1^{M} \cup \mathcal{R}_2^{M}$ that produces a
marked triplet whose first (marked) term is new. Then the variant of the rule
\texttt{match\_surj\_next} replaces a marked triplet whose first (marked) term is
the subterm $t' = +_p(\ldots)$ of $t$ at some position $p$ with a marked
triplet on the same terms, whose surjection rank is incremented. Thus
the number of marked triplets in a constraint is bounded above by the number of
positions of AC symbols in $t$.

(ii) The only rule of $\mathcal{R}_1^{M} \cup \mathcal{R}_2^{M}$ that deletes a
marked triplet whose first (marked) term is $t'$ is the variant of
\texttt{match\_surj\_last}. This marked triplet never appears again because the
variant of the rule \texttt{match\_surj\_next} just increments the surjection
rank of remaining triplets, whose first marked term $t''$ is another subterm of
$t$, located at another position in $t$ ($t''$ at least differs from $t'$ by the
name of its root symbol), and the other rules of $\mathcal{R}_1^{M} \cup
\mathcal{R}_2^{M}$ neither create new marked triplets nor duplicate existing
ones.
\end{proof}

\begin{lemma}\label{R12:termination:lemma}
The rewriting system ${\mathcal{R}_1} \cup \mathcal{R}_2$ is terminating.
\end{lemma}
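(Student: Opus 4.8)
The plan is to prove termination of $\mathcal{R}_1 \cup \mathcal{R}_2$ by combining the two preceding lemmas with the structural bound established in Lemma~\ref{bounded:triplet:lemma}. We already know from Lemma~\ref{R2:termination:lemma} that $\mathcal{R}_2$ alone is terminating, so the only way an infinite derivation could arise is through repeated interaction of the $\mathcal{R}_1$ rules with the $\mathcal{R}_2$ rules. Since all rules of $\mathcal{R}_1$ except \texttt{match\_AC} strictly decrease the term structure (they replace an elementary matching constraint by either \textbf{F}, \textbf{I}, or matching constraints on strict subterms, with no reintroduction of larger problems), the crux is to control the rule \texttt{match\_AC}, which is the only rule capable of (re)creating a triplet and thereby feeding fresh work into $\mathcal{R}_2$.

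First I would pass to the marked variant $\mathcal{R}_1^M \cup \mathcal{R}_2^M$, whose derivations are in bijection with those of $\mathcal{R}_1 \cup \mathcal{R}_2$ by the remark preceding Lemma~\ref{bounded:triplet:lemma}; hence it suffices to prove termination there. Next I would define a well-founded measure on marked constraints. The natural candidate is a lexicographic or multiset combination of three components: \emph{(a)} the total number of elementary matching constraints $\matchAC{p}{q}$ weighted by the size of the pattern $p$ (this handles the non-AC rules \texttt{match}, \texttt{match\_fail}, etc., and the production of subproblems by \texttt{match\_surj\_next} and \texttt{match\_surj\_last}); \emph{(b)} a contribution accounting for the surjection ranks still to be explored in each triplet, namely $\sum_{\trip{t',u',s}} (|S_{n',k'}| - s + 1)$, which strictly decreases each time \texttt{match\_surj\_next} increments a rank and vanishes when \texttt{match\_surj\_last} fires; and \emph{(c)} the number of $\textit{Next}$ constructors, handled exactly as in the proof of Lemma~\ref{R2:termination:lemma}. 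The key input from Lemma~\ref{bounded:triplet:lemma} is that component \emph{(b)} is well defined and finite: part \emph{(i)} guarantees only boundedly many triplets occur at once, and part \emph{(ii)} guarantees that once a triplet on a given marked subterm $t'$ is consumed by \texttt{match\_surj\_last}, it never reappears, so no triplet can have its rank reset and the sum in \emph{(b)} cannot increase.

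I would then check, rule by rule, that each reduction of $\mathcal{R}_1^M \cup \mathcal{R}_2^M$ strictly decreases this composite measure under the lexicographic order. The rule \texttt{match\_AC} produces a $\textit{Next}(\trip{t,u,1})$ from an elementary AC-matching constraint: by Lemma~\ref{bounded:triplet:lemma}\emph{(ii)} this triplet is genuinely new (its marked first term was not previously consumed), so it can be charged against the strict decrease in component \emph{(a)}, the pattern-weighted count of matching constraints, exactly as a non-AC decomposition would be. The rules \texttt{match\_surj\_next} and \texttt{match\_surj\_last} decrease component \emph{(b)}, and the purely propagational \textit{Next}-rules decrease \emph{(c)} without increasing \emph{(b)}.

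The main obstacle, as anticipated above, is the rule \texttt{match\_AC}, because it reintroduces triplets and so does not obviously decrease any single local quantity; the whole force of Lemma~\ref{bounded:triplet:lemma} is needed precisely to show that this reintroduction is globally finite and non-regenerating, so that the rank-sum component \emph{(b)} is a sound well-founded measure rather than something that could be reset indefinitely. Once that obstacle is cleared by invoking the non-reappearance property, the remaining verifications are routine, and the lexicographic product of well-founded orders yields termination of $\mathcal{R}_1^M \cup \mathcal{R}_2^M$, hence of $\mathcal{R}_1 \cup \mathcal{R}_2$.
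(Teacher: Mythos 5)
Your overall strategy---pass to the marked variant and exploit Lemma~\ref{bounded:triplet:lemma} to control the regeneration of triplets---is in the right spirit, but the concrete measure you propose does not decrease under all rules, and the failure is precisely at the point you identify as the crux. The rule \texttt{match\_surj\_next} rewrites $\textit{Next}(\trip{t,u,s})$ to $(\bigwedge_{i=1}^{i=k}\matchAC{t_i}{\alpha_i}) \lor \trip{t,u,s+1}$: it strictly \emph{increases} your component \emph{(a)} (it creates $k$ fresh elementary matching constraints, contributing $\sum_i |t_i| > 0$ to the pattern-weighted count) while decreasing \emph{(b)}. Conversely, \texttt{match\_AC} strictly \emph{increases} component \emph{(b)} (it creates a brand-new triplet contributing $|S_{n,k}|$ to the rank sum) while decreasing \emph{(a)}. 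So whichever of \emph{(a)} and \emph{(b)} you place first in the lexicographic product, one of these two rules increases the measure; you charge \texttt{match\_AC} against a decrease in \emph{(a)}, which forces \emph{(a)} to dominate, but you never account for the increase of \emph{(a)} caused by \texttt{match\_surj\_next}. This circular trade between the two components is exactly the ``back and forth'' between $\mathcal{R}_1$ and $\mathcal{R}_2$ that the lemma must rule out, and a lexicographic (or plain multiset) combination of the quantities you list does not break it. A direct measure proof is possible, but it requires a recursively defined weight in which a triplet $\trip{t,u,s}$ is already charged for \emph{all} the matching subproblems it will ever spawn, roughly $(|S_{n,k}|-s+1)$ times the total weight of one decomposition step plus one, defined by mutual recursion on the pattern; that is a substantially stronger construction than the one you sketch, and you would also need to argue such a weight is well defined.

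The paper takes a different and lighter route that avoids building any global measure. It first gets termination of $\mathcal{R}_1$ alone (all rules except \texttt{match\_AC} form a subsystem of the eager system, and nothing reduces the right-hand side of \texttt{match\_AC}) and of $\mathcal{R}_2$ alone (Lemma~\ref{R2:termination:lemma}), and then argues by contradiction: an infinite derivation would have to alternate between the two systems infinitely often, hence \texttt{match\_surj\_next}---the only $\mathcal{R}_2$-rule creating new $\mathcal{R}_1$-redexes---would fire infinitely often; by the strong boundedness of marked triplets (Lemma~\ref{bounded:triplet:lemma}) some single marked triplet would then have its surjection rank incremented infinitely often, contradicting the bound $|S_{n,k}|$ on ranks. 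You should either adopt that alternation-and-pigeonhole argument or replace your componentwise measure with the genuinely recursive weight described above.
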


\begin{proof}
Notice that $\mathcal{R}_1 \setminus \set{\texttt{match\_AC}} $ is clearly
terminating since it is a subsytem of the eager one, which is known
to be terminating.  We deduce that ${\mathcal{R}}_1$ is terminating since no
rule in ${\mathcal{R}}_1$ reduces the right side of the rule 
\texttt{match\_AC}.
Since $\mathcal{R}_1$ and $\mathcal{R}_2$ are terminating, it remains to show
that there is no infinite  reduction that goes back and forth between
$\mathcal{R}_1$ and $\mathcal{R}_2$. Toward a contradiction, assume that there is
an infinite reduction $C_1 \leadsto^+_{\mathcal{R}_1} C_2
\leadsto^+_{\mathcal{R}_2} C_3 \leadsto^+_{\mathcal{R}_1}\ldots$ that goes back
and forth between ${\mathcal{R}_1}$ and ${\mathcal{R}_2}$, where $C_1$
 is a pattern-matching problem. In $\mathcal{R}_2$ the rule \texttt{match\_surj\_next}
is the only rule producing new redexes for the system $\mathcal{R}_1$, i.e.
pattern-matching problems. Then the rule \texttt{match\_surj\_next} should
appear infinitely often in this infinite reduction. Equivalently, we consider
the marked variant of this derivation, with the same notations. Since the
number of marked triplets in the sequence $C_1,C_2,\ldots$ is strongly bounded (by
Lemma~\ref{bounded:triplet:lemma}) there is a marked triplet $(t,u,s_{k_i})$ and
an infinite sub-sequence $C_{k_1},C_{k_2},\ldots$ of $C_1,C_2,\ldots$ such that
$\trip{t,u,s_{k_i}} \in C_{k_i}$ and $s_{k_{i+1}}=s_{k_{i}}+1$ for each $i$.
This is a contradiction since the rank of surjections is upper bounded.
\end{proof}

\begin{theorem}
The $\textit{Lazy}$ system  is terminating.
\label{termination:theorem}
\end{theorem}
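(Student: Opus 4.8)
The plan is to reduce the theorem to the two termination facts already available and then control the one genuinely new phenomenon introduced by gluing the systems together with the \emph{sat} strategy. By Lemma~\ref{R12:termination:lemma}, $\mathcal{R}_1\cup\mathcal{R}_2$ is terminating, and $\mathcal{R}_3$ (pure distributivity into DNF) is terminating by a standard measure. Since \emph{sat} runs each of these to completion before switching, every individual phase is finite; hence an infinite \emph{Lazy} derivation would have to alternate between the two phases infinitely often. So the whole problem is to rule out an infinite back-and-forth, i.e. to show that the $\mathcal{R}_3$ phases cannot feed the $\mathcal{R}_1\cup\mathcal{R}_2$ phases with unboundedly much fresh work.

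The hard part, and the reason the earlier lemmas do not already close the argument, is precisely this interaction. A $\mathcal{R}_3$ phase distributes $\land$ over $\lor$, which (i) copies whole disjuncts, and so may \emph{duplicate} triplets $\trip{t,u,s}$, and (ii) can surface new $\textbf{F}\lor C$ patterns and new \texttt{fail\_gen} patterns (two conflicting $\matchAC{x}{t}$, $\matchAC{x}{t'}$ brought into one conjunction). In the next $\mathcal{R}_1\cup\mathcal{R}_2$ phase, \texttt{fail\_gen} produces $\textbf{F}$, then \texttt{fail\_next} activates a delayed triplet, and \texttt{match\_surj\_next} advances its surjection rank and emits a fresh batch of matching subproblems. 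Consequently the per-triplet rank bound of Lemma~\ref{bounded:triplet:lemma} — which was established only inside $\mathcal{R}_1^{M}\cup\mathcal{R}_2^{M}$ derivations — is not automatically preserved once $\mathcal{R}_3$ is allowed to duplicate triplets, so I cannot simply reuse that bound.

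To resolve this I would build a single well-founded measure $\mu$ (on constraints taken modulo the structural theory: $\land,\lor$ associative, $\textbf{I}$ neutral and $\textbf{F}$ absorbing for $\land$) sending $\land$ to multiplication and $\lor$ to addition, with $\textbf{I}\mapsto 1$, $\textbf{F}\mapsto 0$, $\mu(\textit{Next}(C))=\mu(C)$, a recursively defined weight $M(p,q)$ for each $\matchAC{p}{q}$, and a ``remaining-surjection budget'' $\mu(\trip{t,u,s})=\sum_{j=s}^{|S_{n,k}|}\bigl(1+\prod_{i} M(t_i,\alpha_i^{(j)})\bigr)$, where $\alpha^{(j)}$ is the arrangement of the $j$-th surjection. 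The crucial design point is that, because $\land\mapsto\times$ and $\lor\mapsto+$, the DNF rules \texttt{DNF\_1} and \texttt{DNF\_2} validate distributivity as an exact identity, so $\mu$ is \emph{invariant} under $\mathcal{R}_3$ despite the duplication; one checks that every rule of $\mathcal{R}_1\cup\mathcal{R}_2$ is non-increasing for $\mu$ (e.g. \texttt{match\_AC} is non-increasing once $M$ on an AC-problem is defined to dominate its surjection expansion, and \texttt{fail\_next} is neutral since $\textbf{F}\mapsto0$ and $\mu(\textit{Next}(C))=\mu(C)$), while \texttt{match\_surj\_next} and \texttt{match\_surj\_last} strictly decrease it, the budget dropping by the surjection just consumed. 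Working modulo the theory keeps every conjunction factor distinct from $\textbf{F}$, hence $\ge 1$, which is exactly what makes the multiplicative reading of $\land$ strictly monotone in context, so the strict decrease of the two \texttt{match\_surj} rules survives nesting.

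It then follows that in any infinite derivation $\mu$ is non-increasing and can strictly decrease only finitely many times, so only finitely many \texttt{match\_surj\_next}/\texttt{match\_surj\_last} steps occur overall. After the last such step, the remaining (still infinite) tail is a derivation of \emph{Lazy} deprived of \texttt{match\_surj\_next} and \texttt{match\_surj\_last}; there every triplet is inert, no rule regenerates matching subproblems, and the system collapses to syntactic matching together with DNF and failure-propagation, which terminates by a routine argument (a recursive path ordering~\cite{Dersh_Ordering82}, or reusing the modular results now that the supply of redexes is finite). This contradicts infiniteness and proves termination. Equivalently one may package the two stages as a lexicographic measure $(\mu,\nu)$, with $\nu$ the termination measure of the \texttt{match\_surj}-free remainder. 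I expect the only genuinely delicate verification to be the recursive definition of $M$ that simultaneously makes \texttt{match\_AC} non-increasing and the two \texttt{match\_surj} rules strictly decreasing, together with confirming that $\mu$ is well defined on classes modulo the $\textbf{F}$/$\textbf{I}$/associativity equations.
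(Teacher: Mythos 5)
Your argument is correct in outline but takes a genuinely different route from the paper. The paper also reduces the problem to ruling out infinite alternation between the saturated $\mathcal{R}_1\cup\mathcal{R}_2$ phase and the saturated $\mathcal{R}_3$ phase, but it then argues syntactically: it observes that the only redexes $\mathcal{R}_3$ can create are \texttt{fail\_gen} redexes, traces the resulting \texttt{fail\_next}/\textit{Next}-propagation, and finally replaces \texttt{match\_AC} by two rules (\texttt{match\_AC\_2} and an unconditional \texttt{always\_next}) so that termination is inherited from the \textit{Eager} system, which expands all surjections at once. Your polynomial interpretation --- $\land\mapsto\times$, $\lor\mapsto+$, $\textbf{F}\mapsto 0$, $\textbf{I}\mapsto 1$, and $\mu(\trip{t,u,s})$ equal to the cost of the \emph{entire remaining eager expansion} --- is really the same idea made quantitative: the triplet budget pre-pays for exactly what the paper's \texttt{always\_next} simulation would do. What your version buys is a single global measure that is provably invariant under the DNF rules (so triplet duplication by $\mathcal{R}_3$ is harmless by construction, rather than by a case analysis of which redexes $\mathcal{R}_3$ can create) and strictly decreasing at each \texttt{match\_surj} step; what the paper's version buys is that it never has to define the recursive weight $M$ or worry about monotonicity, at the price of an informal "we claim" about the redexes created by $\mathcal{R}_3$ and an unproved reduction to the modified system $\mathcal{R}$.

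The one point you should shore up is the strict-monotonicity-in-context claim. You argue that working modulo absorption keeps every conjunction factor distinct from $\textbf{F}$ and hence of measure $\ge 1$, but with $\textbf{F}\mapsto 0$ and $\lor\mapsto+$ a constraint can have measure $0$ without being equal to $\textbf{F}$ modulo the theory (e.g.\ $\textbf{F}\lor\textbf{F}$, or $\textit{Next}(\textbf{F})\lor\textbf{F}$, since $\textbf{F}$ is explicitly \emph{not} a neutral element for $\lor$ in the lazy system). If such a constraint sits as a sibling conjunct of a \texttt{match\_surj} redex, the multiplicative reading loses strictness there. This is fixable --- either by proving the invariant that reachable conjuncts never have measure $0$ unless they are $\textbf{F}$ (the \emph{sat} strategy eliminates $\textbf{F}\lor C$ and $\textit{Next}(\textbf{F})$ within each $\mathcal{R}_1\cup\mathcal{R}_2$ phase), or by counting \texttt{match\_surj} steps with a separate additive component that ignores the conjunctive context --- but as stated it is the one step of your proof that does not yet follow from what you have written. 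The remaining "routine" claims (termination of the residual system without the \texttt{match\_surj} rules, well-definedness of $M$ by recursion on the pattern) are at the same level of informality as the paper's own proof and are unobjectionable.
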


\begin{proof}
On the one hand, the termination of  $\mathcal{R}_1 \cup \mathcal{R}_2$ is proved
in Lemma \ref{R12:termination:lemma}. On the other hand, it is standard to show
that  the system  $\mathcal{R}_3$ is terminating. It corresponds to the
normalization w.r.t. to the disjunctive normal  form. Therefore, it is sufficient
to  show  that there is no infinite  reduction that goes back and forth between
the saturation of $\mathcal{R}_1\cup \mathcal{R}_2$ and the saturation of
$\mathcal{R}_3$. Let $\Eu{R}=\matchAC{t}{u}\leadsto^{\omega}_{\mathcal{R}_1\cup
\mathcal{R}_2} C_1 \leadsto^{\omega}_{\mathcal{R}_3} C_2
\leadsto^{\omega}_{\mathcal{R}_1\cup \mathcal{R}_2}\ldots$ be a reduction in
$\textit{Lazy}$. Notice that each $C_{p}$, where $p$ is even, is of the form $\bigvee_{i=1}^{i=k}F_i$
where $F_i=\bigwedge_{j=1}^{j=m}D_j$ and each $D_j$ is either a triplet or the
matching problem of a term with a variable. If there is no new redex
in $C_p$ then the reduction  $\Eu{R}$ stops. Otherwise, by observing the right side of the rules of $\mathcal{R}_1\cup
 \mathcal{R}_2$, we claim that if there is a new redex in $C_p$ -- which is
 created by the system $\mathcal{R}_3$ -- then this redex is  necessarily of the form
$\matchAC{x}{t}\land \ldots \land \matchAC{x}{t'}$ with $t\neq t'$, producing
 the \textbf{F} constraint. Let $q \in [k]$ be the smallest integer such that
 such a redex appears in $F_{q}$. If $q=k$ then $C_p$ is reduced to
 $\bigvee_{i=1}^{i=k-1} F_i \lor \textbf{F}$ and the reduction $\Eu{R}$
 terminates. Otherwise, 
$
C_{p} \leadsto_{\texttt{fail\_gen}} 
  F_1 \lor \ldots \lor
\textbf{F}\lor F_{q+1}\lor \ldots \lor \textbf{F}\lor \ldots \lor F_k 
\leadsto_{\texttt{fail\_next}}
F_1 \lor \ldots \lor \textit{Next}(F_{q+1}) \lor \ldots \lor \textbf{F}\lor \ldots \lor F_k.
$

The rules of $\mathcal{R}_2$  push the $\textit{Next}$ constructor down, and all
the  constraints of the form $\textit{Next}(\trip{t,u,s})$  in $F_{p+1}$ are
reduced to $(\bigwedge_{i=1}^{i=k}\matchAC{t_i}{\alpha_i}) \lor
\trip{t,u,{s}+1}$ by the rule \texttt{match\_surj\_next}, if
$s(t)=t_{\varepsilon}(\alpha_1,\ldots,\alpha_k)$. To prove that the reduction
$\Eu{R}$ terminates it is sufficient to prove that the system
$\mathcal{R}=\mathcal{R}_1\setminus\set{\texttt{match\_AC}}\cup
\set{\texttt{always\_next},\texttt{match\_AC\_2}} \cup \mathcal{R}_3$ is
terminating, with the following rule definitions:
$$
\texttt{match\_AC\_2: } \matchAC{t}{u} \leadsto \trip{t,u,1} \quad \textrm{ and 
} \quad \texttt{always\_next: }  \trip{t,u,{s}} \leadsto
(\bigwedge_{i=1}^{i=k}\matchAC{t_i}{\alpha_i}) \lor {\trip{t,u,{s}+1}}
.$$
However the termination of $\mathcal{R}$ is ensured by the termination of  
the eager system. That is, we have just replaced the rule \texttt{E\_match\_AC}
of \textit{Eager} with the rules \texttt{match\_AC\_2} and
\texttt{always\_next} in $\mathcal{R}$.
\end{proof}

\subsection{Confluence of lazy AC-matching}
\label{confluence:sec}
The system \textit{Lazy} is not confluent, due to the non-confluence of
$\mathcal{R}_3$. In this section we argue that the system $\mathcal{R}_1\cup
\mathcal{R}_2$ is confluent, and we consider an evaluation strategy
for $\mathcal{R}_3$ to  get a confluent AC-lazy matching system, that we call \lazy.

\begin{proposition}
The system $\mathcal{R}_1 \cup \mathcal{R}_2$ is locally confluent.
\end{proposition}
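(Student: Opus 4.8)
The plan is to apply the critical-pair lemma: to establish local confluence it suffices to enumerate all critical pairs of $\mathcal{R}_1 \cup \mathcal{R}_2$ and show that each is joinable. Because $\land$ and $\lor$ are associative and because \textbf{I} is neutral and \textbf{F} absorbing for $\land$, the matching of a left-hand side is performed modulo this equational theory $E$, so I would work with the critical pairs \emph{modulo} $E$ and check the accompanying coherence (extended) overlaps rather than purely syntactic ones. The first step is to prune the search drastically: for a single constraint $\matchAC{p}{s}$ the applicable rule among \texttt{match\_AC}, \texttt{match}, \texttt{match\_AC\_fail}, \texttt{match\_fail} and \texttt{var\_clash} is uniquely determined by the root symbols and arities of $p$ and $s$, and likewise the rule firing on a redex $\textit{Next}(C)$ is uniquely determined by the shape of $C$ (whether it is \textbf{F}, \textbf{I}, a solved constraint $\matchAC{x}{u}$, a $\land$, a $\lor$ with non-\textbf{F} head, or a triplet $\trip{t,u,s}$ with $s<|S_{n,k}|$ or $s=|S_{n,k}|$). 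Hence no two of these rules ever compete for the same redex; in particular \texttt{match\_surj\_next} and \texttt{match\_surj\_last} have mutually exclusive side conditions, so the triplet machinery is deterministic and contributes no critical pair.

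It then remains to examine the genuinely overlapping rules, which are \texttt{fail\_gen}, \texttt{fail\_next} and \texttt{next\_or}, together with the overlaps induced by the associative $\land$/$\lor$ context. I would organise these into three families. (a) Overlaps of \texttt{fail\_gen} with itself (three or more constraints clashing on a common variable, matched via the neutrality of \textbf{I} for the middle context) and with any reduction occurring inside its context $C$: since the redex and all its residuals rewrite to \textbf{F}, and \textbf{F} is absorbing for $\land$, both divergent reducts equal \textbf{F} modulo $E$, so joinability is immediate. (b) Overlaps created by several \textbf{F} disjuncts, e.g. $\textbf{F} \lor \textbf{F} \lor C$, where associative matching lets \texttt{fail\_next} fire on either the leading pair or the trailing pair; a short computation shows both reducts converge, here to $\textit{Next}(\textit{Next}(C))$. (c) The interaction between \texttt{fail\_next} and \texttt{next\_or}: the side condition $C_1 \neq \textbf{F}$ on \texttt{next\_or} is exactly what forces a leading \textbf{F} to be consumed by \texttt{fail\_next}, so the two rules never produce an unjoinable divergence.

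A last family comes from an $\mathcal{R}_2$ rule that pushes $\textit{Next}$ down a $\land$ or $\lor$ (rules \texttt{next\_and}, \texttt{next\_or}) while an $\mathcal{R}_1$ rule simultaneously reduces a subconstraint, or while that subconstraint collapses to \textbf{F}. These act at disjoint positions and essentially commute; the only care needed is when propagation meets a subconstraint that becomes \textbf{F}, in which case \texttt{next\_fail} and \texttt{fail\_next} restore agreement, and when propagation meets a triplet, which is again dispatched by the deterministic rules \texttt{match\_surj\_next}/\texttt{match\_surj\_last}. Having verified joinability in every family, the critical-pair lemma modulo $E$ yields the local confluence of $\mathcal{R}_1 \cup \mathcal{R}_2$.

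The main obstacle is not any single hard critical pair but the bookkeeping imposed by rewriting modulo the associativity of $\land$ and $\lor$ and the neutral/absorbing elements: syntactic critical-pair analysis does not apply verbatim, so one must either justify the use of the extended critical-pair lemma for rewriting modulo $E$ (checking the coherence and extended overlaps) or argue directly that the flattened $\land$/$\lor$ structure produces only the overlaps enumerated above. Keeping that case analysis exhaustive — especially the $\lor$-overlaps mixing \textbf{F}, \texttt{fail\_next} and \texttt{next\_or} — is where the real effort lies, even though each individual pair joins in a couple of steps.
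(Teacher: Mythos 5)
Your proposal is correct and follows essentially the same route as the paper: the paper's proof simply asserts that it is straightforward to check that no two redexes critically overlap, and then singles out the side condition $C_1 \neq \textbf{F}$ of \texttt{next\_or} as the one ingredient needed to avoid a non-convergent critical pair with \texttt{fail\_next} --- exactly the interaction you isolate in your family (c). Your treatment is considerably more explicit than the paper's (notably about the overlaps induced by associativity of $\land$/$\lor$ and the neutral/absorbing elements, which the paper passes over in silence), but it is the same critical-pair argument.
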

\begin{proof}
It is straightforward to check that there is no critical  overlap between 
 any two redexes, i.e. the contraction of one redex does not destroy the others. 
It is worth mentioning that without the  condition $C_1\neq \textbf{F}$ 
  of the rule \texttt{next\_or} 
we could have non-convergent critical pairs, e.g.
$Next(\textbf{F} \lor C)\leadsto Next(Next(C))$ by the rule 
\texttt{fail\_next} and  
$Next(\textbf{F} \lor C)$ $\leadsto$ $Next(\textbf{F}) \lor C$ 
$\leadsto$ $\textbf{F}\lor C$ $\leadsto$ $Next(C)$.
\end{proof}

\begin{corollary}
The system $\mathcal{R}_1\cup \mathcal{R}_2$
is confluent. 
\end{corollary}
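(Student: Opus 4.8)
The plan is to obtain confluence as an immediate consequence of Newman's Lemma: any rewriting system that is both terminating (strongly normalizing) and locally confluent is confluent. Both hypotheses have already been secured in the preceding results, so the corollary requires no new combinatorial work on the rules themselves. Concretely, I would first recall that termination of $\mathcal{R}_1 \cup \mathcal{R}_2$ is exactly the statement of Lemma~\ref{R12:termination:lemma}, and that local confluence of $\mathcal{R}_1 \cup \mathcal{R}_2$ is the content of the Proposition stated immediately above this corollary.

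With these two facts in hand, I would state Newman's Lemma and apply it directly. Since a terminating reduction relation is in particular well-founded, the lemma applies verbatim: every local peak $C \leadsto C'$, $C \leadsto C''$ can be closed by local confluence into a common reduct, and well-founded induction on the terminating relation then lifts this to arbitrary peaks $C \leadsto^{\star} C'$, $C \leadsto^{\star} C''$, yielding a common normal form. This gives confluence of $\mathcal{R}_1 \cup \mathcal{R}_2$.

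There is essentially no obstacle to overcome here, because the two genuinely substantive ingredients have been isolated into earlier statements: the modular termination argument (Lemma~\ref{R12:termination:lemma}, which rules out infinite back-and-forth reductions between $\mathcal{R}_1$ and $\mathcal{R}_2$ via the bounded-triplet analysis) and the critical-pair inspection (the preceding Proposition, where the side condition $C_1 \neq \textbf{F}$ of \texttt{next\_or} is what prevents the divergent pair around $\textit{Next}(\textbf{F} \lor C)$). The corollary is thus just the packaging step that combines these through the standard diamond-lemma machinery, and I would present it in a single short line invoking Newman's Lemma.
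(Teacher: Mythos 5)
Your proposal is correct and matches the paper's intent exactly: the paper states this corollary without proof, immediately after the local confluence proposition, relying implicitly on Newman's Lemma together with the termination established in Lemma~\ref{R12:termination:lemma}. Your write-up simply makes that implicit one-line argument explicit.
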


 The reason of the non-confluence of 
$\mathcal{R}_3$  is the non-commutativity 
of the operators $\land$ and $\lor$.
It is classical to add a strategy to $\mathcal{R}_3$
so that the resulting system becomes confluent.
\begin{definition}
Let $\mathcal{R}_3^{\downarrow}$  be the system $\mathcal{R}_3$ 
 with the following strategy: 
\textit{(i)} When reducing a constraint of the 
form $\bigwedge_{i=1}^{i=k} \bigvee_j C_{i,j}$ with $k\ge 3$, first reduce
 $\bigwedge_{i=2}^{i=k} \bigvee_j C_{i,j}$, and 
\textit{(ii)}  reduce 
$(\bigvee_{l=1}^{l=m} A_l) \land B$  to 
$(A_1 \land B) \lor ((\bigvee_{l=2}^{l=m} A_l) \land B)$.
\end{definition}
\begin{proposition}(Admitted)
$\mathcal{R}_3^{\downarrow}$ is  confluent.
\end{proposition}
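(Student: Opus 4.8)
The plan is to exploit termination together with the standard fact that, for a terminating rewrite relation, confluence is equivalent to uniqueness of normal forms. First I would note that $\mathcal{R}_3^{\downarrow}$ is terminating: every $\mathcal{R}_3^{\downarrow}$-derivation is in particular an $\mathcal{R}_3$-derivation, and $\mathcal{R}_3$ is shown to be terminating in the proof of Theorem~\ref{termination:theorem}; restricting the admissible steps by a strategy cannot create an infinite derivation. It therefore suffices to prove that every constraint has a unique $\mathcal{R}_3^{\downarrow}$-normal form.

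Next I would isolate why $\mathcal{R}_3$ itself fails to be confluent. Since $\land$ and $\lor$ are only associative and not commutative, applying the distributivity rules \texttt{DNF\_1} and \texttt{DNF\_2} in different orders to a constraint with several nested disjunctions yields the same disjuncts but in different left-to-right orderings; these are exactly the non-joinable critical pairs. The two clauses of the strategy are designed to remove precisely this freedom: clause \textit{(ii)} forces $(\bigvee_{l=1}^{l=m} A_l) \land B$ to be expanded by peeling off the leftmost disjunct $A_1$, and clause \textit{(i)} forces a conjunction $\bigwedge_{i=1}^{i=k} \bigvee_j C_{i,j}$ to be processed from the right, its tail $\bigwedge_{i=2}^{i=k} \bigvee_j C_{i,j}$ being normalised to a disjunction $\bigvee_l B_l$ before the leftmost factor $\bigvee_j C_{1,j}$ is distributed over it.

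The core of the argument is then a canonical-form lemma: modulo the fixed associative bracketing of $\land$ and $\lor$, every $\mathcal{R}_3^{\downarrow}$-derivation from a given constraint reaches one and the same irreducible constraint, namely the flat disjunction whose conjunctions $C_{1,j} \land B_l$ are listed in the lexicographic order on the index pair $(j,l)$ imposed by clauses \textit{(i)} and \textit{(ii)}. I would prove this by induction on the number of $\lor$-factors, the induction hypothesis supplying the canonical ordered form $\bigvee_l B_l$ of the tail. Since $\mathcal{R}_3^{\downarrow}$ is terminating, this uniqueness of normal forms is equivalent to confluence, which proves the proposition.

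The point I expect to be the real obstacle is the residual freedom that the two clauses do not directly constrain. After clause \textit{(ii)} has split off $A_1 = C_{1,1}$, the summand $C_{1,1} \land (\bigvee_l B_l)$ is distributed by \texttt{DNF\_1}, and the strategy fixes neither the grouping of $\bigvee_l B_l$ at which \texttt{DNF\_1} is applied nor the interleaving of this distribution with the recursive treatment of the residual $(\bigvee_{j\ge 2} C_{1,j}) \land (\bigvee_l B_l)$. I would have to check that this leftover freedom is harmless, that is, that every such choice is order-preserving on the generated disjuncts and hence reconverges to the same canonical sequence. Establishing this order-compatibility uniformly across all levels of nesting is where the induction has to be set up with care; once it goes through, uniqueness of normal forms, and thus confluence, follows.
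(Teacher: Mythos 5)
The paper offers no proof to compare against: the proposition is explicitly admitted. Your skeleton --- termination of $\mathcal{R}_3^{\downarrow}$ inherited from $\mathcal{R}_3$ (a strategy only removes derivations), the standard equivalence of confluence with uniqueness of normal forms for terminating relations, and a canonical-form lemma identifying the unique normal form as a flat disjunction in a fixed order --- is the natural way to discharge the claim, and your diagnosis of the source of non-confluence (reordering of disjuncts under different interleavings of \texttt{DNF\_1} and \texttt{DNF\_2}, which cannot be repaired because $\lor$ is not commutative) matches the authors' own remark that non-commutativity is the culprit.

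The gap is the one you name yourself: the canonical-form lemma is asserted, not proved, and it is the entire content of the proposition. What must be established is that the freedom the strategy leaves open --- the associative regrouping of $\bigvee_l B_l$ when \texttt{DNF\_1} is applied to $C_{1,1}\land(\bigvee_l B_l)$, and the interleaving of that reduction with the recursive treatment of $(\bigvee_{j\ge 2}C_{1,j})\land(\bigvee_l B_l)$ --- is order-preserving on the generated conjuncts. This is plausible: each application of \texttt{DNF\_1} to $F\land(G\lor H)$ preserves the left-to-right sequence of the disjuncts of $G\lor H$ regardless of how associativity parses it, and the two residuals created by clause \textit{(ii)} sit at incomparable positions, so their reductions commute. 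But until that invariant is formulated precisely and pushed through the induction on nesting depth, you have an outline rather than a proof. Two further points need explicit attention: clause \textit{(i)} only governs conjunctions of $k\ge 3$ factors, so the base case $k=2$, where both \texttt{DNF\_1} and \texttt{DNF\_2} are simultaneously applicable to $(\bigvee_l A_l)\land(\bigvee_m B_m)$, must be checked to be fully determinized by clause \textit{(ii)} alone; and the ``unique normal form'' you aim at is only canonical modulo the assumed associativity of $\land$ and $\lor$, so the uniqueness statement (and hence the confluence statement) should be read, and proved, modulo that congruence.
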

Now we are ready to define the lazy AC-matching.
\begin{definition}
\label{ac:conf:lazy:def}
The \emph{lazy AC-matching}, denoted by \lazy, is the
rewriting system $\mathcal{R}_1\cup \mathcal{R}_2 \cup \mathcal{R}_3$ composed of the 
 rules of Figures~\ref{Matching-AC}, \ref{Next} and
 \ref{Simplify}, equipped with the evaluation strategy $\emph{sat}^{\downarrow}$ 
that consists of the iteration of the following process:
\emph{(i)} Applying the rules of $\mathcal{R}_1\cup \mathcal{R}_2$ until no
rule is applicable, then \emph{(ii)} applying  $\mathcal{R}_3^{\downarrow}$ 
until no rule  is applicable.
\end{definition}

\begin{theorem}
\label{lazy:term:conf:th}
 \lazy is terminating  and confluent.
\end{theorem}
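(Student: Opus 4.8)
The plan is to prove termination and confluence of \lazy separately, assembling pieces that are already available in the excerpt. For termination, the system underlying \lazy is exactly $\mathcal{R}_1\cup\mathcal{R}_2\cup\mathcal{R}_3$, whose termination is Theorem~\ref{termination:theorem}. The only difference between \textit{Lazy} and \lazy is that the unconstrained system $\mathcal{R}_3$ is replaced by its strategy-restricted version $\mathcal{R}_3^{\downarrow}$, and that step \emph{(ii)} of the strategy uses $\mathcal{R}_3^{\downarrow}$ rather than $\mathcal{R}_3$. Since $\mathcal{R}_3^{\downarrow}$ only restricts the set of admissible reductions of $\mathcal{R}_3$, every \lazy\ derivation is in particular a \textit{Lazy} derivation. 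Hence termination of \lazy\ follows immediately from termination of \textit{Lazy} (Theorem~\ref{termination:theorem}): any infinite \lazy\ reduction would be an infinite \textit{Lazy} reduction, contradicting that theorem.

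For confluence I would combine the two confluence facts established just above. The Corollary states that $\mathcal{R}_1\cup\mathcal{R}_2$ is confluent, and the (admitted) Proposition states that $\mathcal{R}_3^{\downarrow}$ is confluent. The evaluation strategy $\emph{sat}^{\downarrow}$ alternates two phases: first saturate with $\mathcal{R}_1\cup\mathcal{R}_2$, then saturate with $\mathcal{R}_3^{\downarrow}$, and iterate. Each phase separately produces a unique normal form from its input, by confluence together with termination of the respective subsystem. The first step is therefore to argue that each saturation phase is a well-defined deterministic function on constraints: saturating $\mathcal{R}_1\cup\mathcal{R}_2$ from a given constraint yields a unique $\mathcal{R}_1\cup\mathcal{R}_2$-normal form (confluence plus termination), and likewise saturating $\mathcal{R}_3^{\downarrow}$ yields a unique normal form.

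The second step is to verify that composing these two deterministic functions, and iterating the composite, remains deterministic and terminates. Termination of the iteration is exactly the termination of \lazy\ already obtained. Determinism of the composite is immediate from determinism of each phase. Consequently the whole strategy $\emph{sat}^{\downarrow}$ computes a single function from an initial matching constraint to its final normal form, which establishes confluence of \lazy\ in the operational sense intended by the paper. I would phrase this as: since both phases are confluent and terminating, and since the alternation itself terminates, the reflexive–transitive closure of \lazy\ under $\emph{sat}^{\downarrow}$ has the unique-normal-form property, hence is confluent.

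The main obstacle I anticipate is the interface between the two phases rather than either phase in isolation. One must check that the $\mathcal{R}_1\cup\mathcal{R}_2$-normal form and the $\mathcal{R}_3^{\downarrow}$-normal form interact cleanly, namely that saturating one system never reopens redexes of the other in a way that breaks the determinism of the outer iteration. The termination proof of Theorem~\ref{termination:theorem} already controls the back-and-forth between $\mathcal{R}_1\cup\mathcal{R}_2$ and $\mathcal{R}_3$, so the key content I would reuse is the observation there that the only new redexes created across the boundary are of the controlled \texttt{fail\_gen} form; this guarantees that the alternation stabilizes and that the composite function is genuinely well-defined. Modulo this interface check, the theorem is a direct corollary of the termination and confluence results already stated, so I expect the proof to be short.
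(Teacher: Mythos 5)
Your proposal is correct and follows essentially the same route as the paper: termination of \lazy{} is inherited from the termination of \textit{Lazy} (Theorem~\ref{termination:theorem}), since $\mathcal{R}_3^{\downarrow}$ only restricts $\mathcal{R}_3$, and confluence is obtained by combining the confluence of $\mathcal{R}_1\cup\mathcal{R}_2$ with that of $\mathcal{R}_3^{\downarrow}$. Your extra discussion of the interface between the two saturation phases is a reasonable precaution that the paper's own (two-sentence) proof leaves implicit, but it does not change the argument.
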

\begin{proof}
The termination of \lazy is a consequence of the one of \textit{Lazy}. The 
confluence of \lazy follows from the  confluence of $\mathcal{R}_1\cup
\mathcal{R}_2$ and $\mathcal{R}_3^{\downarrow}$.
\end{proof}

In what follows the normal form of a constraint $C$ w.r.t. 
a system $\mathcal{R}$  will be denoted by $\textit{NF}_{\mathcal{R}}(C)$, or just
$\textit{NF}(C)$ if $\mathcal{R}$ is \lazy.

\subsection{Normal forms and lazy lists}
\label{NF:subsec}
In this section we prove Theorem~\ref{NF:theorem} that characterizes  the normal forms
of the lazy AC-matching \lazy. They correspond basically to lazy lists. 
Roughly speaking, a lazy list is composed of a  substitution at the head
 and  a non-evaluated  object that represents the remaining substitutions.  
This characterization of the normal forms is of major importance since 
it guarantees   that  the  element at the head  is always a substitution. 
The formal definition of lazy lists follows.

\begin{definition}
\label{wfll:def}
A \emph{$\land$-substitution} is a conjunction of delayed matching constraints of
the form $\matchAC{x}{u}$ where $x$ is a variable. A $\land$-substitution is
\emph{irreducible} if it cannot be reduced by the rule \texttt{\emph{fail\_gen}}. A
constraint is called a \emph{lazy list} if it is \textbf{F}, \textbf{I}, an
irreducible $\land$-substitution or a constraint of the form $\sigma \lor C$
where $\sigma$ is an irreducible $\land$-substitution and  
$\textit{NF}(\textit{Next}(C))$ is also a lazy list.
\end{definition}

In order to characterize the normal forms of \lazy we first characterize in Lemma
\ref{R12:NF:lemma} the normal forms  of the system $\mathcal{R}_1\cup
\mathcal{R}_2$. Then we characterize in Lemma
 \ref{R123:NF:lemma} the normal forms of $\Rthree$
when it has  the normal forms of $\Ronetwo$ as input. Summing up these results,
we show in
 Proposition \ref{inviariance:FN:prop}
the invariance of the syntax of constraints after the composition of the
application of $\Ronetwo$ and of $\Rthree$. Finally, Theorem \ref{NF:theorem}
becomes an immediate consequence of Proposition \ref{inviariance:FN:prop}.

Let us begin by characterizing the normal forms of $\Ronetwo$ and introduce for
this purpose the grammar $\Eu{G}::=\Eu{G}\land \Eu{G}\;|\;
  \Eu{G} \lor \trip{\Eu{T},\Eu{T},\mathbb{N}^{*}}\;|\;
 \matchAC{\mathcal{X}}{\Eu{T}} \;|\; \textbf{I}$.

\begin{lemma}
\label{R12:NF:lemma}
The normal form of an AC-matching problem by the system $\mathcal{R}_1 \cup
\mathcal{R}_2$ is either \textbf{F} or follows the  grammar $\Eu{G}$.
\end{lemma}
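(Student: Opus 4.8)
The plan is to prove this characterization by induction on the structure of derivations in $\Ronetwo$, showing that the grammar $\Eu{G}$ (together with the sink state \textbf{F}) is closed under every rule of $\mathcal{R}_1 \cup \mathcal{R}_2$, and that every normal form is reached. Since we start from an AC-matching problem $\matchAC{t}{u}$, the very first productive step uses either a rule of $\mathcal{R}_1$ that immediately yields \textbf{F} (the four failure rules), or \texttt{match}, or \texttt{match\_AC}. I would first dispose of the \textbf{F} case: once \textbf{F} is produced at the root it is absorbing for $\land$ and, because this is $\Ronetwo$ without the problematic \texttt{fail\_next} interplay at top level, it persists as a normal form. The substantive content is to show that any non-failing normal form parses as $\Eu{G}$.

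First I would establish the invariant to be maintained along a derivation: every intermediate constraint, after exhaustive reduction, is a disjunction whose first disjunct is a conjunction of atoms drawn from $\set{\matchAC{x}{u},\,\textbf{I}}$ and whose remaining disjuncts are guarded by triplets $\trip{\Eu{T},\Eu{T},\mathbb{N}^{*}}$ — precisely what $\Eu{G}$ encodes, since $\Eu{G}$ forces every $\lor$ to have a triplet on its right. The key mechanism is the pair \texttt{match\_surj\_next} / \texttt{match\_surj\_last}: the former rewrites $\textit{Next}(\trip{t,u,s})$ to $(\bigwedge_{i=1}^{i=k}\matchAC{t_i}{\alpha_i}) \lor \trip{t,u,s+1}$, which is exactly a $\Eu{G}$-conjunction disjoined on the left of a triplet, and the latter terminates the recursion with a pure conjunction when $s = |S_{n,k}|$. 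I would check that the \texttt{next\_*} propagation rules of $\mathcal{R}_2$ distribute $\textit{Next}$ down through $\land$ and to the head of $\lor$ without ever introducing a constraint outside $\Eu{G}$, and that \texttt{match} recursively reduces $\matchAC{t_{\epsilon}(\ldots)}{t_{\epsilon}(\ldots)}$ to a $\Eu{G}$-conjunction of smaller subproblems, each of which is handled by the induction hypothesis.

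The main obstacle I anticipate is controlling the interaction between the recursive structure of \texttt{match} — which spawns fresh AC-matching subproblems on subterms, each of which may itself unfold into a disjunction of triplets — and the requirement that the global normal form still parse as a single $\Eu{G}$ expression. When a subproblem $\matchAC{t_i}{\alpha_i}$ sitting inside a conjunction reduces to a disjunction $\sigma_i \lor \trip{\ldots}$, one must verify that no $\mathcal{R}_3$-style distribution is available (it is not, since $\Ronetwo$ contains no DNF rule), so the disjunction remains nested under $\land$ and the whole thing still conforms to the productions $\Eu{G} \land \Eu{G}$ and $\Eu{G} \lor \trip{\Eu{T},\Eu{T},\mathbb{N}^{*}}$. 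I would verify, by inspecting every right-hand side, that a $\lor$ is produced only by \texttt{match\_surj\_next} and always with a triplet immediately to its right, so the restriction $\Eu{G} \lor \trip{\Eu{T},\Eu{T},\mathbb{N}^{*}}$ is never violated.

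Finally, since $\Ronetwo$ is terminating by Lemma~\ref{R12:termination:lemma} and confluent by the corollary above, the normal form of $\matchAC{t}{u}$ is unique, and the closure argument shows it lies in $\set{\textbf{F}} \cup \mathcal{L}(\Eu{G})$; conversely every constraint generated by $\Eu{G}$ is irreducible under $\Ronetwo$, because its atoms $\matchAC{x}{u}$ with $x$ a variable match no left-hand side except possibly \texttt{fail\_gen} — which, if applicable, would reduce to \textbf{F} rather than produce something outside the claimed set — and its triplets are guarded against \texttt{match\_surj\_next} by the absence of a surrounding $\textit{Next}$. This establishes the dichotomy and completes the argument.
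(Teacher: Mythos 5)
Your overall strategy is the right one and matches the paper's in spirit: analyse the right-hand sides of $\Ronetwo$, observe that $\lor$ is introduced only by \texttt{match\_surj\_next} with a triplet on its right, and recurse into the subproblems spawned by \texttt{match} and \texttt{match\_surj\_next}/\texttt{match\_surj\_last}. However, there is a genuine gap at exactly the point where the lemma is non-trivial. After \texttt{match\_surj\_next} produces $(\bigwedge_{i=1}^{i=k}\matchAC{t_i}{\alpha_i}) \lor \trip{t,u,s+1}$, the head conjunction may normalize to $\textbf{F}$ (the subproblems $\matchAC{t_i}{\alpha_i}$ can fail). The resulting constraint $\textbf{F}\lor\trip{t,u,s+1}$ neither follows $\Eu{G}$ (the grammar has no production for $\textbf{F}$) nor is irreducible: the rule \texttt{fail\_next} of $\mathcal{R}_2$ fires and yields $\textit{Next}(\trip{t,u,s+1})$, so the whole analysis must be restarted at rank $s+1$, possibly iterating all the way to \texttt{match\_surj\_last}. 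Your proposal explicitly dismisses the ``\texttt{fail\_next} interplay'' as a top-level phenomenon and asserts the invariant that every $\lor$ carries a triplet on its right and hence conforms to $\Eu{G}$; that invariant is precisely what breaks in this case, and you never argue why the iteration over ranks preserves the claimed shape or terminates in it.

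The paper closes this gap with a double induction: an outer induction on the number of symbols in the pattern $t$ (so that each subproblem $\matchAC{t_i}{\alpha_i}$ is strictly smaller and its normal form is $\textbf{F}$ or follows $\Eu{G}$ by hypothesis), and an inner induction on $|S_{n,k}|-s$ to handle the chain $\textbf{F}\lor\trip{t,u,s+1}\leadsto\textit{Next}(\trip{t,u,s+1})\leadsto\cdots$ until either some rank yields a non-failing conjunction or \texttt{match\_surj\_last} closes the recursion. Your ``induction on the structure of derivations'' does not supply a well-founded measure for this second iteration, so the step ``each of which is handled by the induction hypothesis'' is not justified as written. Adding the explicit measure $|S_{n,k}|-s$ for the generalized claim about $\textit{NF}(\textit{Next}(\trip{t,u,s}))$, and treating the $\textbf{F}$ head case via \texttt{fail\_next}, would complete your argument and bring it in line with the paper's proof.
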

\begin{proof}
Let  $\matchAC{t}{u}$ be an AC-matching problem. When
$\matchAC{t}{u}\leadsto_{\mathcal{R}_1 \cup \mathcal{R}_2} \textbf{F}$,
 the normal form of the AC-matching problem is $\textbf{F}$, since no
 rule rewrites $\textbf{F}$. The other cases when $t$ or $u$ is a variable
 are also trivial: The matching problem is reduced to \textbf{I} or is irreducible.
 
It remains to consider the case when $t=t_{\epsilon}(t_1,\ldots,t_k)$ and
$u=t_{\epsilon}(u_1,\ldots,u_n)$ for some $k,n \geq 0$.
The proof is by induction on the number of symbols in $t$.
If $t$ is a constant, i.e. $k = 0$, then $n=0$ and $u = t$. The AC-matching
problem $\matchAC{t}{u}$ is reduced to \textbf{I}, thus follows the  grammar
$\Eu{G}$.
Otherwise, $k \geq 1$. Only one rule can be applied, \texttt{match\_AC} or
\texttt{match}, depending on the nature of the symbol $t_{\epsilon}$ at the root of $t$.

\noindent \emph{Case 1.} If $t_{\epsilon}$ is an AC symbol, then $k \le n$ and
$\matchAC{t}{u} \leadsto_{\texttt{match\_AC}} \textit{Next}(\trip{t,u,1})$. We
prove more generally that the normal form of $\textit{Next}(\trip{t,u,s})$ by
$\mathcal{R}_1\cup \mathcal{R}_2$  is \textbf{F} or follows the grammar
$\Eu{G}$.  Let $s(u)=t_{\epsilon}(\alpha_1,\ldots,\alpha_k)$ and $C =
\bigwedge_{i=1}^{i=k}\matchAC{t_i}{\alpha_i}$.

If $|S_{n,k}|-s=0$, then $\textit{Next}(\trip{t,u,s})
\leadsto_{\texttt{match\_surj\_last}} C \leadsto^{\star} \textit{NF}(C)$. Otherwise, when
$|S_{n,k}|-s>0$, $\textit{Next}(\trip{t,u,s})
\leadsto_{\texttt{match\_surj\_next}} C \lor \trip{t,u,s+1} \leadsto^{\star}
\textit{NF}(C) \lor \trip{t,u,s+1}$.
In both cases, since each $t_i$ contains less symbols than $t$, the induction
hypothesis holds for each $t_i$, and hence the normal form $\textit{NF}(C)$ of $C =
\bigwedge_{i=1}^{i=k}\matchAC{t_i}{\alpha_i}$ is \textbf{F} or a constraint
which follows the grammar $\Eu{G}$, since \textbf{F} is an absorbing element
for $\land$.

For this case the remainder of the proof is by induction on $|S_{n,k}|-s$. The
basic case when $|S_{n,k}|-s=0$ has already been treated. When $|S_{n,k}|-s>0$,
there are two cases. If $\textit{NF}(C)$ follows the grammar $\Eu{G}$ then it obviously
also holds for $\textit{NF}(C) \lor \trip{t,u,s+1}$. Otherwise, $\textit{NF}(C)$ is \textbf{F}
and $\textit{Next}(\trip{t,u,s}) \leadsto^{\star} \textbf{F} \lor
\trip{t,u,s+1} \leadsto_{\texttt{fail\_next}} \textit{Next}(\trip{t,u,s+1})$.
Since $|S_{n,k}|-(s+1)<|S_{n,k}|- s$, the  induction hypothesis gives that
$\textit{NF}(\textit{Next}(\trip{t,u,s+1}))$ is \textbf{F} or follows the grammar
$\Eu{G}$, hence also for $\textit{Next}(\trip{t,u,s})$.

\noindent \emph{Case 2.}
If $t_{\epsilon}$ is not an AC symbol, then $k=n$ and $\matchAC{t}{u}$
$\leadsto_{\texttt{match}}$ $\bigwedge_{i = 1}^{i = k} \matchAC{t_i}{u_i}$
$\leadsto^{\star}$ $\bigwedge_{i = 1}^{i = k} \textit{NF}(\matchAC{t_i}{u_i})$.
By induction hypothesis, for each $i$, $\textit{NF}(\matchAC{t_i}{u_i})$ is 
\textbf{F} or follows the grammar $\Eu{G}$. If
$\textit{NF}(\matchAC{t_i}{u_i})=\textbf{F}$ for some $i \in[k] $, then
$\matchAC{t}{u}\leadsto^{*} \textbf{F}$, since \textbf{F} is an
absorbing element for $\land$. Otherwise, $\textit{NF}(\matchAC{t_i}{u_i})$ follows
$\Eu{G}$ for each $i$, and then it
obviously  also holds for their conjunction, and for $\textit{NF}(\matchAC{t}{u})$.
\end{proof}

\begin{lemma}
\label{stabl1:lemma}
Let $C$ be an irreducible constraint w.r.t. $\mathcal{R}_1\cup \mathcal{R}_2$
 that follows  the grammar $\Eu{G}$. Then, the normal form
of $\textit{Next}(C)$ by $\mathcal{R}_1\cup \mathcal{R}_2$ is $C$.
\end{lemma}

\begin{proof}
The proof is by induction on the grammar constructions of $\Eu{G}$.
If $C$ is \textbf{I} or a matching problem of the
form  $\matchAC{x}{u}$ where $x$ is a variable, then the rules
\texttt{next\_id} and \texttt{next\_basic} ensure that $\textit{Next}(C)\leadsto
C$.  If $C=C_1\land C_2$ then $\textit{Next}(C_1\land C_2)\leadsto
\textit{Next}(C_1)\land \textit{Next}(C_2)$ and the induction hypothesis 
  $\textit{NF}(\textit{Next}(C_1))=C_1$
and $\textit{NF}(\textit{Next}(C_2))=C_2$ apply to show that
$\textit{NF}(\textit{Next}(C_1\land C_2)) = \textit{NF}(C_1\land C_2)=C_1\land C_2$. 
If $C=C_1\lor \trip{t,u,s}$,
then $\textit{Next}(C_1\lor \trip{t,u,s}) \leadsto \textit{Next}(C_1)\lor \trip{t,u,s}
 \leadsto C_1 \lor \trip{t,u,s}$.
\end{proof}

We can generalize the previous lemma by the following one.
\begin{lemma}
\label{stabl2:lemma}
Let $F$ be a constraint of the form $\bigwedge_{i} F_i$, where each 
$F_i$ is either a triplet or follows 
the grammar $\Eu{G}$, such that $F$ is irreducible w.r.t.  
$\mathcal{R}_1\cup \mathcal{R}_2$.
Then, the normal form of $\textit{Next}(F)$ by $\mathcal{R}_1\cup \mathcal{R}_2$
is \textbf{F} or follows the  grammar $\Eu{G}$.
\end{lemma}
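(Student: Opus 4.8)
The plan is to push the $\textit{Next}$ constructor through the outer conjunction, normalise each resulting conjunct separately, and then recombine. First I would apply the rule \texttt{next\_and} repeatedly to obtain $\textit{Next}(\bigwedge_i F_i) \leadsto^{\star} \bigwedge_i \textit{Next}(F_i)$. Since $\Ronetwo$ is confluent (the corollary following the local-confluence proposition above), it suffices to reduce each conjunct $\textit{Next}(F_i)$ to its own normal form and only afterwards analyse the interactions between conjuncts: by confluence and termination any such reduction strategy reaches the unique normal form $\textit{NF}(\textit{Next}(F))$. I would also observe that, because $F=\bigwedge_i F_i$ is irreducible w.r.t.\ $\Ronetwo$, each $F_i$ is irreducible as well, since a reduction inside a conjunct would be a reduction of $F$.

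Next I would normalise each $\textit{Next}(F_i)$ according to the shape of $F_i$. If $F_i$ follows the grammar $\Eu{G}$, then by Lemma~\ref{stabl1:lemma} we have $\textit{NF}(\textit{Next}(F_i)) = F_i$, which follows $\Eu{G}$. If $F_i$ is a triplet $\trip{t,u,s}$, then $\textit{Next}(F_i)$ is exactly the term treated within Case~1 of the proof of Lemma~\ref{R12:NF:lemma}, where it is shown that $\textit{NF}(\textit{Next}(\trip{t,u,s}))$ is either \textbf{F} or follows $\Eu{G}$; I would simply invoke that result. Writing $G_i = \textit{NF}(\textit{Next}(F_i))$, each $G_i$ is thus \textbf{F} or a $\Eu{G}$-constraint, and it remains to determine the normal form of $\bigwedge_i G_i$.

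For the recombination, two cases arise. If some $G_i$ is \textbf{F}, then since \textbf{F} is absorbing for $\land$ the whole conjunction reduces to \textbf{F}, which the statement allows. Otherwise every $G_i$ follows $\Eu{G}$, and since $\Eu{G}$ is closed under $\land$ (its first production) the conjunction $\bigwedge_i G_i$ syntactically follows $\Eu{G}$. The only remaining question is whether $\bigwedge_i G_i$ is already irreducible. Each $G_i$ is a normal form, so no redex lies strictly inside a conjunct; moreover no $G_i$ contains \textbf{F}, $\textit{Next}$, or a non-variable matching problem, since none of these appears in $\Eu{G}$, so the rule \texttt{fail\_next}, the \texttt{next\_\ldots} rules and the \texttt{match}/\texttt{var\_clash} rules cannot fire. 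Hence the only rule that could create a fresh redex across conjunct boundaries is \texttt{fail\_gen}.

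The main obstacle is precisely the bookkeeping around \texttt{fail\_gen}. I would argue that it fires only when two \emph{top-level} conjuncts are bare variable matching problems $\matchAC{x}{t}$ and $\matchAC{x}{t'}$ with $t\neq t'$; a matching problem occurring inside a disjunction $C \lor \trip{t',u',s'}$ is \emph{not} a top-level conjunct and therefore does not trigger the rule, so the presence of such disjunctive conjuncts never disturbs the $\Eu{G}$-structure. Consequently either \texttt{fail\_gen} applies to a clashing pair of variable conjuncts and, with the absorbing property of \textbf{F}, collapses the whole conjunction to \textbf{F}, or no such clash exists and $\bigwedge_i G_i$ is irreducible and follows $\Eu{G}$. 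In both cases $\textit{NF}(\textit{Next}(F))$ is \textbf{F} or follows $\Eu{G}$, as claimed.
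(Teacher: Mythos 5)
Your proposal is correct and follows essentially the same route as the paper: push \textit{Next} through the conjunction with \texttt{next\_and}, note each $F_i$ is irreducible, apply Lemma~\ref{stabl1:lemma} to the $\Eu{G}$-conjuncts and the generalized claim from Case~1 of the proof of Lemma~\ref{R12:NF:lemma} to the triplets, then recombine using absorption of \textbf{F} and closure of $\Eu{G}$ under $\land$. Your extra bookkeeping about \texttt{fail\_gen} possibly firing across conjunct boundaries is a point the paper's one-line recombination step silently glosses over, and your resolution (it can only collapse the whole conjunction to \textbf{F}, which the statement permits) is the right one.
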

\begin{proof}
By iterating application of the rule \texttt{next\_and} we get
$\textit{Next}(\bigwedge_{i} F_i)\leadsto^{*}\bigwedge_{i}\textit{Next}(F_i)$.
$F_i$ is irreducible w.r.t. $\mathcal{R}_1\cup \mathcal{R}_2$  because $F$ is
irreducible. If $F_i$ follows the grammar $\Eu{G}$, then
$\textit{NF}(\textit{Next}(F_i))=F_i$ by Lemma \ref{stabl1:lemma}. If $F_i$ is a triplet  
then  by  Lemma \ref{R12:NF:lemma} the normal form of $\textit{Next}(F_i)$
is \textbf{F} or follows $\Eu{G}$. Therefore, the normal form of
 $\bigwedge_{i}\textit{Next}(F_i)$ is \textbf{F} or follows  the  grammar
 $\Eu{G}$.
\end{proof}

We define the grammar $\Eu{K}::=\Eu{K}
\land \Eu{K}\;|\; \matchAC{\Eu{X}}{\Eu{T}}\;|\;
\trip{\Eu{T},\Eu{T},\mathbb{N}^{\star}}$ for conjunctions of atomic constraints,
the grammar $\Eu{F}::=\Eu{F}\lor \Eu{F} \;|\; \Eu{K}$ for
constraints in DNF, the grammar $\Eu{S} ::= \Eu{S} \land \Eu{S} \;|\;
\matchAC{\Eu{X}}{\Eu{T}}$ for $\land$-substitutions,
and the grammar $\Eu{H}::=\Eu{S} \lor \Eu{F}\;|\; \Eu{S}\;|\; \textbf{I}\lor
\Eu{F}\;|\; \textbf{I}$ to formulate the following two lemmas. The first one is
about normal forms by $\Rthree$ of inputs which are normal forms of $\Ronetwo$.

\begin{lemma}
\label{R123:NF:lemma}
Let $C$ be an irreducible  constraint w.r.t. $\mathcal{R}_1\cup \mathcal{R}_2$
 that follows  the grammar $\Eu{G}$. Then, the normal form of $C$ by the system
 $\mathcal{R}_3^{\downarrow}$ follows the grammar $\Eu{H}$.
\end{lemma}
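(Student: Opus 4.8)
The plan is to proceed by structural induction on the grammar $\Eu{G}$, exploiting the key feature of $\Eu{G}$ that a triplet $\trip{\Eu{T},\Eu{T},\mathbb{N}^{\star}}$ can occur only as the \emph{right} argument of a $\lor$. Consequently the leftmost branch of every disjunction is triplet-free, and the role of the strategy $\downarrow$ of $\Rthree$ is precisely to gather these leftmost branches into the single head clause of the DNF. The target grammar $\Eu{H}$ records exactly this shape: a head that is a $\land$-substitution $\Eu{S}$ (or $\textbf{I}$), followed by an arbitrary DNF tail $\Eu{F}$ in which triplets are allowed. Throughout, note that $\textbf{F}$ never occurs, since $\textbf{F}\notin\Eu{G}$ and no rule of $\mathcal{R}_3$ creates it, and that $\textbf{I}$ is handled as a neutral element for $\land$.

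For the base cases, if $C=\textbf{I}$ or $C=\matchAC{x}{u}$ with $x\in\varset$, then $C$ is already irreducible w.r.t.\ $\mathcal{R}_3$ and trivially follows $\Eu{H}$ (as $\textbf{I}$ and as $\Eu{S}$, respectively). For the disjunction case $C=G_1\lor\trip{t,u,s}$, the system $\mathcal{R}_3$ leaves the top-level $\lor$ untouched and can only reduce inside $G_1$; since $G_1$ is a subconstraint of $C$ it is again irreducible w.r.t.\ \Ronetwo\ and follows $\Eu{G}$, so the induction hypothesis gives that $\textit{NF}(G_1)$ follows $\Eu{H}$ and hence $\textit{NF}(C)=\textit{NF}(G_1)\lor\trip{t,u,s}$. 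A short case analysis on the shape of $\textit{NF}(G_1)$ finishes this case: the appended triplet is a $\Eu{K}$, hence an $\Eu{F}$, and since $\Eu{F}\lor\Eu{F}$ is again $\Eu{F}$ while the head ($\Eu{S}$ or $\textbf{I}$) is preserved, $\textit{NF}(C)$ follows $\Eu{H}$.

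The main work is the conjunction case. Writing $C=D_1\land\cdots\land D_m$ in factored form, where each $D_i$ is a non-conjunctive element of $\Eu{G}$ (an atomic match, $\textbf{I}$, or a disjunction-chain), each $D_i$ inherits irreducibility w.r.t.\ \Ronetwo\ and follows $\Eu{G}$, so the induction hypothesis gives $\textit{NF}(D_i)=S_i\lor F_i$ with $S_i$ a triplet-free head ($\Eu{S}$ or $\textbf{I}$) and $F_i$ a possibly absent tail in $\Eu{F}$. I then drive the reduction according to the strategy: clause \emph{(i)} reduces the tail conjunction $D_2\land\cdots\land D_m$ first, and clause \emph{(ii)} repeatedly peels off the leftmost disjunct of the left factor. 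Tracking these steps shows that the first disjunct produced is the conjunction $S_1\land\cdots\land S_m$ of the heads, which, using neutrality of $\textbf{I}$, is itself triplet-free, hence an $\Eu{S}$ (or $\textbf{I}$ when every $S_i$ is $\textbf{I}$); each remaining disjunct is a conjunction of $\Eu{K}$-elements, hence a $\Eu{K}$, so the tail is an $\Eu{F}$. Thus the result has the form $\Eu{S}\lor\Eu{F}$ (or one of the degenerate forms $\Eu{S}$, $\textbf{I}\lor\Eu{F}$, $\textbf{I}$) and follows $\Eu{H}$. To conclude that this is genuinely $\textit{NF}_{\Rthree}(C)$, I observe that the exhibited constraint is irreducible—its only top-level connective is $\lor$, the head $\Eu{S}$ contains no $\lor$, and $\Eu{F}$ is in DNF, so no \texttt{DNF\_1}/\texttt{DNF\_2} redex remains—and invoke termination and confluence of $\Rthree$ for uniqueness.

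I expect the delicate point to be exactly the bookkeeping of the conjunction case: verifying that the combined effect of strategy clauses \emph{(i)} and \emph{(ii)} really deposits the triplet-free conjunction of heads as the \emph{single} leading disjunct, and that no triplet leaks into it while all triplets end up confined to the tail $\Eu{F}$. Handling the neutral element $\textbf{I}$ carefully, so that the head collapses to $\textbf{I}$ precisely when every $S_i$ is $\textbf{I}$, and confirming the absence of $\textbf{F}$ are the remaining routine checks.
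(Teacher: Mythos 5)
Your proposal is correct and follows essentially the same route as the paper's proof: structural induction on the grammar $\Eu{G}$, with the trivial base cases $\textbf{I}$ and $\matchAC{x}{u}$, a conjunction case where the induction hypothesis puts each factor in the form $\sigma_i\lor F_i$ and the DNF rules combine the heads into a single leading $\land$-substitution while the triplets stay in the $\Eu{F}$ tail, and a disjunction case $C_1\lor\trip{t,u,s}$ handled by applying the induction hypothesis to $C_1$. The paper works the conjunction case as a binary case analysis ($C_2$ being $\textbf{I}$, $\sigma_2$, or $\sigma_2\lor F_2$) rather than your $m$-ary factored form, but this is only a presentational difference.
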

\begin{proof}
On the one hand, since $C$ follows the grammar $\Eu{G}$, it is built
up on $\land$, $\lor$, \textbf{I}, matching constraints of the form
$\matchAC{x}{u}$ and triplets. On the other hand, the normal form of $C$ by
$\mathcal{R}_3^{\downarrow}$  is in DNF. Therefore it is sufficient to show that
$\textit{NF}_{\mathcal{R}_3^{\downarrow}}(C)$ is either \textbf{I} or of the
form $\sigma \lor F$, where $\sigma$ is either a $\land$-substitution or
\textbf{I}, and $F$ follows  $\Eu{F}$. The  proof is by   induction on the
grammar constructions of $\Eu{G}$. If  $C$ is $\textbf{I}$ or $\matchAC{x}{u}$
then the claim holds.
Otherwise, we distinguish two cases:

\noindent \emph{Case 1.} If   $C=C_1 \land C_2$, then we only discuss the
non-trivial case when $\textit{NF}(C_1)$ or $\textit{NF}(C_2)$ is of the form $\sigma\lor F$.
Assume that  $C_1=\sigma_1\lor F_1$, the  other case
can be handled similarly.
In this case $C_2$ can be \textbf{I}, $\sigma_2$ or $\sigma_2\lor F_2$.
If $C_2=\textbf{I}$ then 
$(\sigma_1 \lor F_1)\land \textbf{I}=\sigma \lor F_1$. 
If $C_2=\sigma_2$, then 
$(\sigma_1 \lor F_1)\land \sigma_2 \leadsto 
(\sigma_1\land\sigma_2) \lor (F_1\land \sigma_2)$. 
Finally, if $C_2=\sigma_2 \lor
F_2$, then $(\sigma_1 \lor F_1)\land (\sigma_2 \lor F_2) \leadsto^{*}
      (\sigma_1 \land \sigma_2)\lor (\sigma_1 \land F_2) \lor 
      (F_1 \land\sigma_2) \lor (F_1 \land F_2)$ and the claim holds.

\noindent \emph{Case 2.} If $C=C_1 \lor \trip{t,u,s}$, then  we apply the
induction hypothesis on $\textit{NF}(C_1)$, and the desired result follows. 
\end{proof}

The following lemma describes the syntax of the result of
$\textit{Next}(F)$ by the application of $\Ronetwo$
followed by the application of $\Rthree$, when $F$ is an irreducible
constraint in DNF.

\begin{lemma}
\label{invariance:next:lemma}
Let  $\phi(p)=\bigvee_{i=1}^{i=p}\bigwedge_{j=1}^{j=q}C_{i,j}$ 
be an irreducible constraint
w.r.t. $\mathcal{R}_1\cup \mathcal{R}_2$ that follows the grammar $\Eu{F}$.
Then  $\textit{NF}_{ \mathcal{R}_3^{\downarrow}}(\textit{NF}_{\mathcal{R}_1\cup \mathcal{R}_2}(Next(\phi(p))))$ 
is either \textbf{F} or follows the grammar $\Eu{H}$.
\end{lemma}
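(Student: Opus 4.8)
The plan is to proceed by induction on the number $p$ of top-level disjuncts, writing $\phi(p)=F_1\lor\cdots\lor F_p$ where, because $\phi(p)$ follows $\Eu{F}$, each $F_i=\bigwedge_j C_{i,j}$ is a conjunction of atomic constraints, namely variable-matching constraints $\matchAC{x}{t}$ and triplets $\trip{t,u,s}$. The first observation is that $F_1\neq\textbf{F}$, since each disjunct follows $\Eu{K}$ and $\Eu{K}$ never produces $\textbf{F}$; hence the rule \texttt{next\_or} applies and $\textit{Next}(\phi(p))\leadsto \textit{Next}(F_1)\lor F_2\lor\cdots\lor F_p$. Each $F_i$ is exactly of the shape required by Lemma~\ref{stabl2:lemma} (a conjunction whose conjuncts are triplets or follow $\Eu{G}$, recalling that $\matchAC{x}{t}$ is a production of $\Eu{G}$) and is irreducible w.r.t.\ $\Ronetwo$, being a subconstraint of the irreducible $\phi(p)$. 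Therefore $\textit{NF}_{\Ronetwo}(\textit{Next}(F_1))$ is either $\textbf{F}$ or follows $\Eu{G}$, and this dichotomy will drive the case split.

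For the base case $p=1$ there is no disjunction: either $\textit{NF}_{\Ronetwo}(\textit{Next}(F_1))=\textbf{F}$, and then $\textit{NF}_{\Rthree}(\textbf{F})=\textbf{F}$; or it follows $\Eu{G}$, and then Lemma~\ref{R123:NF:lemma} gives directly a normal form following $\Eu{H}$. For the inductive step $p>1$, suppose first that $\textit{NF}_{\Ronetwo}(\textit{Next}(F_1))=\textbf{F}$. Then $\textit{Next}(F_1)\lor F_2\lor\cdots\lor F_p\leadsto^{\star}\textbf{F}\lor F_2\lor\cdots\lor F_p\leadsto_{\texttt{fail\_next}}\textit{Next}(F_2\lor\cdots\lor F_p)$. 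Since $\Ronetwo$ is confluent and terminating, $\textit{NF}_{\Ronetwo}(\textit{Next}(\phi(p)))=\textit{NF}_{\Ronetwo}(\textit{Next}(F_2\lor\cdots\lor F_p))$, and $F_2\lor\cdots\lor F_p$ is again an irreducible DNF following $\Eu{F}$ with one fewer disjunct, so the induction hypothesis applies verbatim and yields $\textbf{F}$ or a constraint following $\Eu{H}$.

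The remaining case, where $G_1:=\textit{NF}_{\Ronetwo}(\textit{Next}(F_1))$ follows $\Eu{G}$, is where I expect the real work. Here I first argue that $G_1\lor F_2\lor\cdots\lor F_p$ is genuinely $\Ronetwo$-irreducible: $G_1$ is a normal form, each $F_i$ is irreducible, and no top-level $\lor$-rule fires because the head $G_1$ is not $\textbf{F}$ and carries no $\textit{Next}$; hence it equals $\textit{NF}_{\Ronetwo}(\textit{Next}(\phi(p)))$. Applying $\Rthree$, I observe that the tail $F_2\lor\cdots\lor F_p$ is already in DNF (each $F_i\in\Eu{K}$), so every $\Rthree$-redex lies inside $G_1$ and $\textit{NF}_{\Rthree}(G_1\lor F_2\lor\cdots\lor F_p)=\textit{NF}_{\Rthree}(G_1)\lor F_2\lor\cdots\lor F_p$. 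By Lemma~\ref{R123:NF:lemma}, $\textit{NF}_{\Rthree}(G_1)$ follows $\Eu{H}$, i.e.\ it is $\sigma$ or $\sigma\lor F'$ (or the $\textbf{I}$-variants) with $\sigma$ a $\land$-substitution and $F'$ following $\Eu{F}$; reattaching the tail gives $\sigma\lor(F'\lor F_2\lor\cdots\lor F_p)$, and since $\Eu{F}$ is closed under $\lor$ the whole disjunction following $\sigma$ still follows $\Eu{F}$, so the constraint follows $\Eu{H}$. The main obstacle is precisely this last case: justifying that the $\Rthree$-strategy leaves the DNF tail untouched while normalizing only $G_1$, and checking that reattaching the tail preserves the $\Eu{H}$ shape, so that the ``substitution at the head'' invariant is not broken.
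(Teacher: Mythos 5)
Your proof is correct and follows essentially the same route as the paper's: induction on $p$, peeling off the head disjunct with \texttt{next\_or}, using Lemma~\ref{stabl2:lemma} to get the $\textbf{F}$/$\Eu{G}$ dichotomy, then either \texttt{fail\_next} plus the induction hypothesis or Lemma~\ref{R123:NF:lemma} plus reattachment of the tail (which follows $\Eu{F}$). Your extra care in justifying that the $\Rthree^{\downarrow}$ normalization acts only on the head $G_1$ while leaving the DNF tail intact makes explicit a step the paper leaves implicit, but the argument is the same.
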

\begin{proof}
The proof is by induction on $p$. If $p=1$ then by Lemma \ref{stabl2:lemma}
$\phi' = \textit{NF}_{\mathcal{R}_1\cup \mathcal{R}_2}(Next(\phi(1)))$ is \textbf{F} or
follows  $\Eu{G}$. Therefore $\textit{NF}_{\mathcal{R}_3^{\downarrow}}(\phi')$ is 
\textbf{F}, or  follows  $\Eu{H}$ by Lemma \ref{R123:NF:lemma}. If $p>1$
then $Next(\bigvee_{i=1}^{i=p}\bigwedge_{j=1}^{j=q}C_{i,j}) \leadsto
Next(\bigwedge_{j=1}^{j=q}C_{1,j}) \lor \phi(p-1)$.
 By  Lemma \ref{stabl2:lemma} again 
$\textit{NF}_{\mathcal{R}_1\cup \mathcal{R}_2}(Next(\bigwedge_{j=1}^{j=q}C_{1,j}))$
  is \textbf{F} or follows $\Eu{G}$. 
  
  In the first case, we apply
  $\textbf{F} \lor \phi(p-1) \leadsto Next(\phi(p-1))$, and use the
  induction hypothesis that $\textit{NF}_{\mathcal{R}_3^{\downarrow}}(\textit{NF}_{\mathcal{R}_1\cup
\mathcal{R}_2}(Next(\phi(p-1))))$ follows $\Eu{H}$.  In the second case, it
comes from Lemma \ref{R123:NF:lemma} that
 $\psi=\textit{NF}_{\mathcal{R}_3^{\downarrow}}(\textit{NF}_{\mathcal{R}_1\cup
 \mathcal{R}_2}(Next(\bigwedge_{j=1}^{j=q}C_{1,j})))$
follows $\Eu{H}$. Hence $\psi \lor (\phi(p-1))$ follows $\Eu{H}$, since
$\phi(p-1)$ follows $\Eu{F}$.
\end{proof}

Now we are ready to prove the following invariance proposition. It generalizes
the previous lemma by considering an arbitrary constraint following the grammar
$\Eu{H}$.

\begin{proposition}(Invariance proposition)
\label{inviariance:FN:prop}
Let $C$ be a constraint following the grammar $\Eu{H}$.
  Then 
  		\linebreak
  $\textit{NF}_{\Rthree}(\textit{NF}_{\Ronetwo}(C))$ is \textbf{F} or follows the grammar 
$\Eu{H}$.
\end{proposition}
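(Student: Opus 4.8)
The plan is to argue by case analysis on the four productions of $\Eu{H}$, dispatching the constant cases directly and reducing the two disjunctive cases to Lemma~\ref{invariance:next:lemma} and Lemma~\ref{R123:NF:lemma}. Since $\Ronetwo$ is confluent, I may compute $\textit{NF}_{\Ronetwo}(C)$ in any convenient order, and I would always contract the redexes of the head before those of the tail. If $C=\textbf{I}$ the claim is trivial. If $C=\sigma$ with $\sigma$ following $\Eu{S}$, the only rule of $\Ronetwo$ that can fire is \texttt{fail\_gen}: either $\sigma$ carries a clash $\matchAC{x}{t}\land\cdots\land\matchAC{x}{t'}$ with $t\neq t'$ and reduces to $\textbf{F}$, or it is irreducible and already in disjunctive normal form, so that $\Rthree$ acts as the identity; both outcomes satisfy the conclusion.

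For the main productions $C=\sigma\lor F$ and $C=\textbf{I}\lor F$ (with $F$ following $\Eu{F}$), I would reduce the head first. When the head is a clashing substitution it rewrites to $\textbf{F}$, whence $\sigma\lor F\leadsto^{\star}\textbf{F}\lor F\leadsto_{\texttt{fail\_next}}\textit{Next}(F)$; bringing $F$ beforehand into the irreducible disjunctive form required by Lemma~\ref{invariance:next:lemma} (legitimate by confluence of $\Ronetwo$) places us exactly in the hypotheses of that lemma, which yields $\textbf{F}$ or an $\Eu{H}$-form. When the head is irreducible, i.e.\ $\textbf{I}$ or a clash-free $\sigma$, the leftmost operand of the top disjunction is never $\textbf{F}$, so no \texttt{fail\_next} can consume it and the head is preserved throughout the reduction; the tail is activated only through the \texttt{fail\_gen}/\texttt{fail\_next}/\texttt{match\_surj\_next} cascade, which is precisely the mechanism analysed in Lemma~\ref{invariance:next:lemma} (with Lemmas~\ref{R12:NF:lemma} and~\ref{stabl2:lemma} controlling the shapes of the freshly created subproblems). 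Prepending the preserved clash-free head keeps the result in the $\Eu{S}\lor\Eu{F}$ (resp.\ $\textbf{I}\lor\Eu{F}$) shape, and a final appeal to Lemma~\ref{R123:NF:lemma} ensures that the $\Rthree$-normalisation stays within $\Eu{H}$.

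The hard part will be the bookkeeping of this cascade and, above all, the degenerate case in which the whole tail collapses: when every disjunct of $F$ either clashes or is a triplet all of whose surjections fail, the reductions drive $F$ to $\textbf{F}$ while the clash-free head $\sigma$ survives, producing the irreducible constraint $\sigma\lor\textbf{F}$, which is neither $\textbf{F}$ nor an $\Eu{H}$-form. I must therefore show this situation cannot arise for the constraints to which the proposition is actually applied --- relying on the fact that such a $C$ always comes out of a previous $\Ronetwo$-then-$\Rthree$ pass, so that a fully failing tail would already have been propagated by \texttt{fail\_next} and never re-attached behind a surviving head --- or else I would broaden $\Eu{H}$ to tolerate a trailing $\textbf{F}$. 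Verifying that this invariant is genuinely maintained, and that confluence legitimately lets me precompute the tail into the irreducible disjunctive form demanded by Lemma~\ref{invariance:next:lemma}, is where the real work lies.
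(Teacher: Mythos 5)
Your overall plan coincides with the paper's: dispatch the constant productions, then split $C=\sigma\lor F$ according to whether the head $\sigma$ contains a failure redex, reduce the failing-head case to Lemma~\ref{invariance:next:lemma}, and finish with Lemma~\ref{R123:NF:lemma}. The gap is in the surviving-head case with failure redexes inside the tail $F$. There you appeal to Lemma~\ref{invariance:next:lemma} as ``precisely the mechanism'', but its hypotheses are not met: that lemma concerns $\textit{Next}(\phi)$ for a $\phi$ that is already irreducible w.r.t.\ $\Ronetwo$, whereas here no outer $\textit{Next}$ is ever applied to $F$ (the head is not \textbf{F}) and $F$ is reducible. What is actually needed, and what the paper supplies, is a separate argument that $\textit{NF}_{\Rthree}(\textit{NF}_{\Ronetwo}(F))$ itself follows $\Eu{F}$, so that prepending the preserved head yields an $\Eu{H}$-form: the paper runs an explicit elimination procedure showing that $\textit{NF}_{\Ronetwo}(F)$ is \textbf{F} or a disjunction $\bigvee_{i\in I'}F'_i$ in which each surviving disjunct is either an original $F_i$ or $\textit{NF}_{\Ronetwo}(\textit{Next}(F_i))$, then applies Lemma~\ref{stabl2:lemma} and Lemma~\ref{R123:NF:lemma} disjunct by disjunct, plus an induction on $|I'|$ for the subcase where the head fails as well. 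None of this bookkeeping appears in your proposal; you name it as ``where the real work lies'' and stop.

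Second, you correctly spot the degenerate constraint $\sigma\lor\textbf{F}$ (a surviving head whose tail collapses entirely, last disjunct included), which is irreducible because \texttt{fail\_next} only fires on a \emph{leading} \textbf{F}, and which is neither \textbf{F} nor an $\Eu{H}$-form. But you leave both of your proposed remedies (proving it unreachable, or enlarging $\Eu{H}$) unexecuted, so the proposal does not establish the stated conclusion. For what it is worth, the paper's own Case~2 algorithm quietly assumes the failing disjunct has a successor $F_{p+1}$ and does not visibly treat this corner either; still, an honestly flagged open case is an open case. As it stands your text is a correct outline of the paper's strategy with its two hardest steps left unproved.
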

\begin{proof}
The case when $C$ is a $\land$-substitution or \textbf{I} is trivial.
Otherwise, let $C=\sigma \lor F$, where $\sigma$ is \textbf{I} or a
$\land$-substitution and $F$ follows  $\Eu{F}$.
Notice that the only potential redexes in $C$  w.r.t. 
the system $\mathcal{R}_1\cup \mathcal{R}_2$ are of the form
$\matchAC{x}{u_1}\land \ldots \land \matchAC{x}{u_2}$ such that 
$u_1\neq u_2$. In this case the rule \texttt{fail\_gen} is
applied. Let us call such redexes \emph{failure redexes.}
We  distinguish two cases. 

\noindent \textit{Case 1.} If there is no 
 failure redex in $F$ (i.e. $F$ is irreducible w.r.t. $\Ronetwo$) 
then we again distinguish two cases. 
If there is no failure redex in $\sigma$, then we are done.
Otherwise, 
$\sigma \lor F \leadsto \textbf{F}\lor F \leadsto \textit{Next}(F)$,
and the result follows from Lemma \ref{invariance:next:lemma}.

\noindent \textit{Case 2.} If there are some failure redexes in $F$ then assume
that $F=\bigvee_{i=1}^{i=m}F_i$, and let $I=[m]$. Let us argue  that
$\textit{NF}_{\Ronetwo}(F)$  is either \textbf{F} or of the form $\bigvee_{i \in
I'}F'_i$ where either $F'_i=F_i$ or $F'_i=\textit{NF}_{\Ronetwo}(Next(F_i))\neq
\textbf{F}$ for some $I'\subseteq I$. We propose an algorithm to construct
$I'$. Let $I':=I$ initially. (a) If $m=1$, the expected form for
$\textit{NF}_{\Ronetwo}(F)$ is obtained with the current $I'$. Otherwise, if $m>1$, let
$p$ be the smallest integer in $[m]$ such that $F_i$ contains a failure redex.
Therefore we have $\bigvee_{i=1}^{i=m}F_i\leadsto \bigvee_{i=1}^{i=p-1}F_i \lor
\texttt{ fail} \lor  F_{p+1}\lor \bigvee_{i=p+2}^{i=m}F_i \leadsto
\bigvee_{i=1}^{i=p-1}F_i \lor  Next(F_{p+1})\lor \bigvee_{i=p+2}^{i=m}F_i$.
Continue the elimination of the failure redexes in $\bigvee_{i=p+2}^{i=m}F_i$ by
iterating (a) with $\{p+2, \ldots,m\}$ instead of $[m]$. Let $G$ by the
resulting disjunction. If $\textit{NF}_{\Ronetwo}(Next(F_{p+1}))\neq \textbf{F}$, then
let $I':=I'\setminus\set{p}$. Otherwise, if
$\textit{NF}_{\Ronetwo}(Next(F_{p+1}))=\textbf{F}$, then let
$I':=I'\setminus\set{p,p+1}$ and continue the reduction on
 $Next(G)$. 

Since  $F'_i=F_i$ or $F'_i=\textit{NF}_{\Ronetwo}(Next(F_i))$, where $i\in I'$,
then  
\begin{enumerate}
  \item[(i)] 
 $\textit{NF}_{\Rthree}(F'_i)$ either follows $\Eu{F}$ or
 follows $\Eu{H}$ by   Lemma \ref{R123:NF:lemma}. 
Therefore, $\textit{NF}_{\Rthree}(\bigvee_{i \in I'}F'_i)$ is either 
\textbf{F} or follows $\Eu{F}$ and $\textit{NF}_{\Rthree}(\textit{NF}_{\Ronetwo}(F))$ is either \textbf{F} or follows the grammar 
$\Eu{F}$. 
\item[(ii)] In order to simplify the computations,  
let  $q=|I'|$ and consider the renaming 
$\bigvee_{i\in [q]} H_i=\bigvee_{i\in I'} F'_i$. 
We argue by induction on $q$ that 
$\textit{NF}_{\Rthree}(\textit{NF}_{\Ronetwo}(Next(\bigvee_{i\in [q]}H_i)))$ 
is either \textbf{F} or follows $\Eu{H}$. 
If $q=1$, then by Lemma \ref{stabl2:lemma} it follows that 
$\textit{NF}_{\Ronetwo}(Next(H_i))$ is either \textbf{F} or 
 follows $\Eu{G}$ and therefore, by Lemma \ref{R123:NF:lemma},
we have that $\textit{NF}_{\Rthree}(\textit{NF}_{\Ronetwo}(Next(H_i)))$ is either 
\textbf{F} or follows $\Eu{H}$.
If $q>1$, then 
$Next(\bigvee_{i\in [q]}H_i)\leadsto Next(H_1) \lor \bigvee_{i=2}^{i=q}H_i$.
If $Next(H_1) \leadsto^{\star} \textbf{F}$, then we apply the induction 
hypothesis to 
$\textit{NF}_{\Rthree}(\textit{NF}_{\Ronetwo}(Next(\bigvee_{i=2}^{i=q}H_i)))$. Otherwise,
$\textit{NF}_{\Ronetwo}(Next(H_1))$ follows $\Eu{G}$, and hence 
$\textit{NF}_{\Rthree}(\textit{NF}_{\Ronetwo}(Next(H_1)))$ follows $\Eu{H}$. On the other 
hand, $\textit{NF}_{\Rthree}(\bigvee_{i=2}^{q}H_i)$ follows $\Eu{F}$,
by (i). Summing up, we get that $\textit{NF}_{\Rthree}(\textit{NF}_{\Ronetwo}(Next(H_1))) \lor 
\textit{NF}_{\Rthree}(\bigvee_{i=2}^{q}H_i))$ follows $\Eu{H}$.
\end{enumerate}

Now we distinguish two cases for $\sigma$.  If there is no failure redex in
$\sigma$, then by (i) we get that $\sigma \lor \textit{NF}_{\Rthree}(\textit{NF}_{\Ronetwo}(F))$
is either a $\land$-substitution or follows $\Eu{H}$. 
Otherwise, if there are some failure redexes in $\sigma$, then
 we get $\sigma \lor F \leadsto^{\star} \textbf{F}\lor \textit{NF}_{\Ronetwo}(F) 
\leadsto Next(\textit{NF}_{\Ronetwo}(F))$. From (ii)
it follows that $\textit{NF}_{\Rthree}(Next(\textit{NF}_{\Ronetwo}(F)))$ is  \textbf{F}
or follows $\Eu{H}$.
 \end{proof}

\begin{theorem}
\label{NF:theorem}
The normal form of a pattern-matching constraint $C$ by the system 
 \lazy  is a lazy list.
\end{theorem}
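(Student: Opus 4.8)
The plan is to first determine the \emph{shape} of $\textit{NF}(C)$ and then read that shape off as a lazy list in the sense of Definition~\ref{wfll:def}. Writing the initial pattern-matching constraint as an AC-matching problem $\matchAC{t}{u}$, I would trace the strategy $\textit{sat}^{\downarrow}$ one full round at a time. The first saturation by $\Ronetwo$ yields, by Lemma~\ref{R12:NF:lemma}, either \textbf{F} or a constraint following the grammar $\Eu{G}$; the ensuing saturation by $\Rthree$ produces, by Lemma~\ref{R123:NF:lemma}, a constraint following $\Eu{H}$ (or \textbf{F}). From the second round on the input already follows $\Eu{H}$, so Proposition~\ref{inviariance:FN:prop} applies and shows that each further round keeps the constraint within $\Eu{H}$ (or collapses it to \textbf{F}). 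Since \lazy is terminating and confluent (Theorem~\ref{lazy:term:conf:th}), the iteration reaches a unique normal form, and the above shows that $\textit{NF}(C)$ is \textbf{F} or follows the grammar $\Eu{H} = \Eu{S}\lor\Eu{F} \mid \Eu{S} \mid \textbf{I}\lor\Eu{F} \mid \textbf{I}$.

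Next I would match $\Eu{H}$ against Definition~\ref{wfll:def}. The cases \textbf{F}, \textbf{I} and $\Eu{S}$ are immediate: a normal form that is a $\land$-substitution is by construction irreducible (in particular w.r.t.\ \texttt{fail\_gen}), hence an \emph{irreducible} $\land$-substitution, and \textbf{I}, \textbf{F} are base cases. The only genuine work is the case $\sigma\lor F$ (treating \textbf{I} as the empty $\land$-substitution for the $\textbf{I}\lor\Eu{F}$ alternative), where $\sigma$ is the irreducible head and $F$ follows $\Eu{F}$. Here the definition demands that $\textit{NF}(\textit{Next}(F))$ be a lazy list. Since $\sigma\lor F$ is a \lazy normal form and the redexes of $\Ronetwo$ act inside a single disjunct, $F$ is itself irreducible w.r.t.\ \Ronetwo, so Lemma~\ref{invariance:next:lemma} followed by Proposition~\ref{inviariance:FN:prop} shows that $\textit{NF}(\textit{Next}(F))$ is \textbf{F} or again follows $\Eu{H}$. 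The argument can then be closed by well-founded induction.

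The crux, and the step I expect to be the main obstacle, is supplying the measure that makes this induction legitimate, i.e.\ proving that the repeated unfolding $F\mapsto\textit{NF}(\textit{Next}(F))$ cannot continue forever. I would use the lexicographic measure $(\beta(F),\,d(F))$, where $d(F)$ is the number of top-level disjuncts of $F$ and $\beta(F)$ is the total \emph{surjection budget}, namely the sum over the triplets occurring in $F$ of a weight $\beta(\trip{t,u,s})$. The weight is defined by well-founded recursion simultaneously on the size of $t$ and on $|S_{n,k}|-s$, by $\beta(\trip{t,u,s}) = 1 + \beta(\trip{t,u,s+1}) + (\text{budget of the inner triplets spawned by surjection } s)$ when $s<|S_{n,k}|$, and $\beta(\trip{t,u,|S_{n,k}|}) = 1 + (\text{budget of the inner triplets spawned by the last surjection})$. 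This is finite because surjection ranks are bounded and, by Lemma~\ref{bounded:triplet:lemma}, the inner triplets are built on strictly smaller subterms of $t$ and never recur once consumed, so the recursion is grounded in exactly the termination data of Theorem~\ref{lazy:term:conf:th}.

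With this measure the induction closes by a two-case analysis on the first disjunct $F_1$ of $F$, mirroring how \textit{Next} acts through \texttt{next\_or}. If $F_1$ carries a triplet, rule \texttt{match\_surj\_next} (or \texttt{match\_surj\_last}) advances its rank, which by the defining recursion of $\beta$ strictly decreases the first component; the spawned inner matchings only create triplets whose budget was already counted in $\beta(\trip{t,u,s})$, so no increase occurs, and the induction hypothesis applies to the tail of $\textit{NF}(\textit{Next}(F))$. If $F_1$ is triplet-free it is a complete irreducible $\land$-substitution, \textit{Next} leaves it unchanged, so $\textit{NF}(\textit{Next}(F))=F$ and the lazy-list clause peels $F_1$ off as the head while the obligation moves to the strictly shorter tail $F_2\lor\cdots\lor F_m$; here $\beta$ is unchanged and $d$ strictly decreases (the base case $m=1$ being an irreducible $\land$-substitution). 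In either case the measure drops, so the unfolding terminates after finitely many steps, each producing an irreducible $\land$-substitution at the head, which is precisely the content of Definition~\ref{wfll:def}; hence $\textit{NF}(C)$ is a lazy list.
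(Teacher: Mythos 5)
Your proposal follows essentially the same route as the paper: establish via Lemmas~\ref{R12:NF:lemma} and~\ref{R123:NF:lemma} and Proposition~\ref{inviariance:FN:prop} that $\textit{NF}(C)$ is \textbf{F} or follows $\Eu{H}$, read off the head $\sigma$ as an irreducible $\land$-substitution, and discharge the tail obligation $\textit{NF}(\textit{Next}(F))$ with Lemma~\ref{invariance:next:lemma}. The one genuine difference is that you make explicit, and justify with the lexicographic measure $(\beta(F),d(F))$, the well-foundedness of the repeated unfolding $F \mapsto \textit{NF}(\textit{Next}(F))$ demanded by the recursive clause of Definition~\ref{wfll:def}; the paper's proof applies Lemma~\ref{invariance:next:lemma} only once and leaves the iteration (and its termination) implicit, effectively delegating it to the termination of \lazy. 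Your surjection-budget measure is sound --- advancing a rank strictly decreases $\beta$ because the spawned inner triplets are pre-counted in $\beta(\trip{t,u,s})$, and peeling a substitution head decreases $d$ --- so this is a legitimate strengthening of the written argument rather than a detour. The only minor quibble, shared with the paper, is that \textbf{I} must be read as the empty $\land$-substitution for the $\textbf{I}\lor\Eu{F}$ alternative of $\Eu{H}$ to fall under the $\sigma\lor C$ clause of Definition~\ref{wfll:def}.
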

\begin{proof}
From  the termination of \lazy (Theorem~\ref{lazy:term:conf:th}) and
Proposition~\ref{inviariance:FN:prop}, we deduce that the normal form of \lazy is
either \textbf{F} or  follows the grammar $\Eu{H}$ and does not contain any
failure redex. Such a normal form is of the form $\sigma$ or $\sigma \lor F$,
where $\sigma$ is either \textbf{I} or an irreducible $\land$-substitution.
Therefore, it remains to show that $\textit{NF}(Next(F))$ is a lazy list,
or, equivalently, that $\textit{NF}_{\Rthree}(\textit{NF}_{\Ronetwo}(Next(F)))$ follows $\Eu{H}$.
But this holds by Lemma \ref{invariance:next:lemma}.
\end{proof}

\section{Lazy AC-rewriting with strategies}
\label{lazy:rewriting:sec}

In this section we integrate lazy AC-matching with strategy application.   
More details on strategy languages can be found in 
\cite{BKKR-IJFCS-2001,Vis01-rta,Marti-OlietMV05}.

Primitive strategies are rewrite rules $\rrule{l}{r}$ and the
\textsf{id} and \textsf{fail} strategies that respectively always and
never succeed. They are completed with the most
usual reduction strategies, namely the four traversal strategies
\textit{leftmost-outermost}, \textit{leftmost-innermost},
\textit{parallel-outermost} and \textit{parallel-innermost}~\cite[Definition
4.9.5]{Terese03} that control a rewrite system by selecting redexes according to
their position. For sake of simplicity we restrict their control to a single
rewrite rule. Let $v$ be one of these four strategies. The application of $v$ to
the rewrite rule $\rrule{l}{r}$ is denoted by $v(\rrule{l}{r})$. 
The sequential composition of two strategies $u$ and $w$ is denoted by $u;w$.
The application of a strategy $u$ to a term $t$ is denoted by $[u] \cdot t$.

A strategy application produces a lazy list of terms, defined by the grammar
$\Eu{L}~::=~\bot_{\EuScript{T}}$ $|$ $\Eu{L}~::~\Eu{L}$ $|$
$\Eu{T}$ $|$ $\Eu{C}(\Eu{T})$.
 A list is usually defined by a constructor for an empty list and a constructor
adding one element at the head of another list. Then concatenation of two
lists is defined, with another notation. Here we equivalently introduce an
associative symbol $::$ for concatenation of two lists of terms, and a symbol
$\bot_{\EuScript{T}}$ to denote an empty list of terms, which is a neutral
element for $::$. We use the same conventions as for $\lor$ in
Section~\ref{ac:match:sec}. Let $\llist{\EuScript{T}}$ denote the set of lazy
lists of terms.

\begin{figure}[bht!]
\setlength{\abovecaptionskip}{-0.3cm}

\subfigure[\textsf{id}, \textsf{fail} and composition rules]{ 
$
\begin{array}{|r  l|}
\hline
\texttt{identity:} & [\textsf{id}]\cdot \tau \leadsto \tau \\
\texttt{failure:}& [\textsf{fail}]\cdot \tau  \leadsto \bot_{\EuScript{T}}  \\
\texttt{compose:} & [u;v]\cdot \tau \leadsto [u] \cdot ([v] \cdot \tau) \\
\hline
\end{array}
$
 \label{primitive:tab}
}
$\quad$
\subfigure[Top rewriting]{
$
\begin{array}{@{}|r@{\;}l|@{}}
\hline
\texttt{rule1:} &
 [\rrule{l}{r}]\cdot \bot_{\EuScript{T}} \leadsto
   \bot_{\EuScript{T}} \\
\texttt{rule2:} &[\rrule{l}{r}]\cdot (t :: {\tau}) \leadsto 
  (\matchth{l}{t}{AC})(r) :: ([\rrule{l}{r}]\cdot \tau) \\
\texttt{subs\_fail:}& \textbf{F} (t) \leadsto \bot_{\EuScript{T}} \\
\texttt{subs\_id:}& \textbf{I} (t) \leadsto t \\
\texttt{subs:}& (\sigma\lor C)(t) \leadsto \sigma(t) :: {C(t)} \\
\hline
\end{array}
$
 \label{AC-rewriting:top}
}
\caption{AC-rewriting operational semantics}
\end{figure}

The operational semantics of the strategies \textsf{id} and \textsf{fail} and of
strategy composition are defined in Figure~\ref{primitive:tab} for any lazy list
of terms $\tau$ and any two strategies $u$ and $v$. The operational semantics of
top rewriting is defined in Figure~\ref{AC-rewriting:top}. Let \texttt{LTR} be
the system composed of these five rules and the \lazy AC-matching. The rules
\texttt{rule1} and \texttt{rule2} reduce the application of a rewrite rule at the top of terms in a lazy list of
terms. In \texttt{rule2} the expression $\matchth{l}{t}{AC}$ is reduced to its
normal form by \lazy. The result is a lazy list of constraints.
 The rules
\texttt{subs\_fail}, \texttt{subs\_id} and \texttt{subs} reduce the application
of a lazy list of constraints on a term. In the right side of rule
\texttt{subs},  a $\land$-substitution $\sigma$ is applied to a term, in a
sense which is a simple extension of the standard definition of substitution
application. Equivalently $\land$-substitutions can be reduced to standard ones
by a transformation similar to the post-processing  defined for the eager
AC-matching.

 The application of the system
\texttt{LTR} to the expression $[\rrule{l}{r}]\cdot(t)$ produces either
$\bot_{\EuScript{T}}$ or a non-empty lazy list of terms $u_1 :: \tau_1$,
where $u_1$ is the first result of the application of the rewrite rule $\rrule{l}{r}$ at the top of the
term $t$ and $\tau_1$ is (a syntactic object denoting) the lazy list of the
other results. When applying on $\tau_1$ the rewrite system defined in
Figure~\ref{Next:term},
 and then the system \texttt{LTR}, we get again
either $\bot_{\EuScript{T}}$ or a non-empty lazy list $u_2 :: \tau_2$, where $u_2$ is
the second result of the application of the rewrite rule $\rrule{l}{r}$ at the top of the
term $t$ and $\tau_2$ is again (a syntactic object denoting) a lazy list of
terms that represents the remaining results, and so on. 

\begin{figure}[hbt!]
\begin{align*}
\begin{array}{|l l l l|}
\hline
\texttt{next\_empty:}&  \textit{Next}(\bot_{\EuScript{T}}) \leadsto \bot_{\EuScript{T}}
&\hspace{1.5cm} \texttt{next\_app:}&  \textit{Next}(L_1 :: L_2) \leadsto 
  \textit{Next}(L_1) :: L_2  \\
\texttt{next\_term:}&  \textit{Next}(t) \leadsto t,  \;\; \textrm{if } t \textrm{
is a term} 
&\hspace{1.5cm} \texttt{next\_cstr:}& \textit{Next}(C(t)) \leadsto 
 \textit{Next}(C)(t) \\
\hline 
\end{array}
\end{align*}
\caption{\textit{Next} rules for lists of terms\label{Next:term}}
\end{figure}

The application of a traversal strategy on a lazy list of terms in
$\llist{\EuScript{T}}$ is defined as follows:
\begin{align*}
\begin{array}{|r  l|}
\hline
 \texttt{traversal1:}
    & [v(\rrule{l}{r})]\cdot \bot_{\EuScript{T}} \leadsto
       \bot_{\EuScript{T}} \\
 \texttt{traversal2:}
    & [v(\rrule{l}{r})]\cdot(t :: \tau) \leadsto 
       [v(\rrule{l}{r})]\cdot t :: [v(\rrule{l}{r})]\cdot \tau\\
\hline       
\end{array}       
\end{align*}
 The rules
\begin{align*}
[v(u)]\cdot t\leadsto 
\begin{cases} 
[u]\cdot t 
  & \textrm{ if } [u]\cdot t \neq \bot_{\EuScript{T}} 
     \textrm{ or } t \in \varset     \\ 
  \up f([v(u)]\cdot t_1,\ldots,[v(u)]\cdot t_n)
  & \textrm{ if } [u]\cdot t = \bot_{\EuScript{T}} 
        \textrm{ and }  t=f(t_1,\ldots,t_n)   
\end{cases}
\end{align*}
define the application of the traversal strategy $v(u)$ on the term $t$ for the
rewrite rule $u$ and the \textit{parallel-outermost} strategy constructor $v$. 
The other traversal strategies can be handled similarly. We have seen that the
application of a rewrite rule at the top of a term yields a lazy list of terms in
$\llist{\EuScript{T}}$. Here the application of a rule to a term at arbitrary
depth, via a traversal strategy, yields a \emph{decorated} term, which is a term
where some subterms are replaced by a lazy list of terms. This lazy list of
terms is abusively called a \emph{lazy subterm}, with the property that lazy
subterms are not nested. In other words, the positions of two lazy subterms are not comparable, for the
standard prefix partial order over the set of term positions. The operator $\up$
is assumed to reduce a decorated term to a lazy list of terms. We
summarize its behavior as follows. A decorated term can be encoded by a
tuple $(t,k,p,\delta)$ where $t$ is the term before strategy application, $k$ is the number of
decorated positions, $p$ is a function from $\{1, \ldots, k\}$ to the domain of
$t$ (i.e. its set of positions) which defines the decorated positions, such that
$p(i)$ and $p(j)$ are not comparable if $i \neq j$, and $\delta$ is the function
from $\{1, \ldots, k\}$ to $\llist{\EuScript{T}}$ such that $\delta(i)$ is the
lazy list at position $p(i)$, $1 \leq i \leq k$. It is easy to construct an
iterator over the $k$-tuples of positive integers (for instance in
lexicographical order), and to derive from it an iterator over tuples of terms
$(s_1, \ldots, s_k)$ with $s_i$ in the list $\delta(i)$ for $1 \leq i \leq k$.
From this iterator and function $p$ we derive an iterator producing the lazy
list $\up t$ by replacing each subterm $t_{|p(i)}$ by $s_i$.

\section{Implementation and experiments}
\label{implem:sec}
We present here a prototypal implementation of lazy AC-matching and report about
its experimentation. Our implementation  is  a straightforward
translation of the \lazy system in the rule-based language
\texttt{symbtrans} \cite{BGL-JSC10} built on the computer algebra system Maple.

This prototype obviously does not claim
 efficiency in the usual sense of the number of solutions computed in a given
amount of time. But this section shows that our prototype optimises the standard
deviation of the time between two successive solutions. This performance
criterion corresponds to our initial motivations and can be measured on the
prototype. We consider the matching problem
$\matchAC{x_1+\ldots+x_{18}}{a_1+\ldots+a_{18}}$, for 18 variables $x_1$,
\ldots, $x_{18}$ and 18 constants $a_1$, \ldots, $a_{18}$. On this problem our
lazy prototype provides any two consecutive solutions in an average time of
$0.37$ seconds, with a standard deviation of $0.021$ seconds between the 100-th
first solutions. In comparison the computation time between two consecutive
solutions with the Maude function \texttt{metaXmatch} grows exponentially.
The experiment is done on an Intel core 2 Duo T6600@2.2GHz with 3.4Gb of
memory, under a x86\_64 Ubuntu Linux.

Finally, it is worth mentioning the performance of the Maple standard matching
procedure \texttt{pat\-match(expr, pattern,'s')} that returns true if it is able
to match \texttt{expr} to \texttt{pattern}, and false otherwise. If the matching
is successful, then \texttt{s} is assigned a substitution such that $AC
\models\texttt{s(pattern)=expr}$.
This procedure runs out of memory if the arity of the AC symbols is large.
With Maple 14 this failure can be observed  when computing a solution of the
matching problem $\matchAC{x_1+\ldots+x_{12}}{a_1+\ldots+a_{12}}$.

\section{Conclusion}
\label{conclusion:sec}
We presented a lazy AC-matching algorithm and a lazy evaluation semantics for
AC-rewriting and some basic strategies. The semantics is designed to be
implemented in a strict language by representing delayed matching
constraints and lazy lists of terms by explicit objects. We also described a
common principle for lazy traversal strategies. The potential benefits are
clear: performances are dramatically increased by avoiding unnecessary
computations. We are working on an implementation of lazy AC-matching and
AC-rewriting: first results show that our approach is  efficient when the arity
of AC symbols is high and when the number of solutions of the AC-matching
problem is large. However, we do not claim efficiency for the search of the
first solution by the AC-matching algorithm.

Here no neutral element is assumed for AC symbols. As a consequence the lazy
AC-matching relies on a surjection iterator. We plan to address the question of
its efficiency, and to extend the present work to AC symbols with a neutral
element. Our intuition is that our approach can easily be adapted to that
case.

\bibliographystyle{eptcs}

\end{document}